\documentclass[a4paper,12pt]{article}

\usepackage[utf8]{inputenc} 
\usepackage[T1]{fontenc}    
\usepackage{lmodern}        
\usepackage[margin=1in]{geometry} 
\usepackage{microtype}      
\usepackage{amsmath, amssymb, amsfonts, amsthm, mathtools}
\usepackage{algorithm}
\usepackage{algpseudocode}

\usepackage{authblk}

\usepackage{graphicx}
\usepackage{hyperref}

\newtheorem{lemma}{\bf Lemma}[section]

\newtheorem{proposition}{\bf Proposition}[section]
\newtheorem{definition}{\bf Definition}[section]

\title{Wasserstein Geometry of Information Loss in Nonlinear Dynamical Systems}

\author[1,3]{Yiting Duan}
\author[2]{Zhikun Zhang}
\author[3]{Yi Guo\thanks{Corresponding author: \texttt{yg@westernsydney.edu.au}}}

\affil[1]{Translational Health Research Institute, Western Sydney University, Australia}
\affil[2]{School of Mathematics and Statistics, Northwestern Polytechnical University, Xi'an, China}
\affil[3]{Centre for Research in Mathematics and Data Science, Western Sydney University, Australia}
\date{\vspace{-5ex}}
\begin{document}

\maketitle

\begin{abstract}
Time-delay embedding is a powerful technique for reconstructing the dynamics of nonlinear systems. However, the reconstruction map is not always an embedding, a condition rarely verified in practice. When the reconstruction map is non-injective, multiple latent states may map to the same reconstructed state, leading to multi-valued $n$-step evolution. Consequently, the induced system no longer admits a deterministic closure, and the dispersion of future trajectories leads to ambiguity. In this work, we establish a measure-theoretic framework to quantify the ambiguity induced by multi-valued evolution and introduce intrinsic stochasticity to quantify the ambiguity over a finite horizon. For numerical implementation, we use the $k$-nearest-neighbor estimator to approximate intrinsic stochasticity under finite-resolution and finite-sampling settings. Numerical experiments on the synthetic and real-world datasets are consistent with the expectation: reconstructions closer to deterministic closure tend to produce lower scores, and deterministic predictors that take reconstructions with lower empirical closure scores as input are associated with lower rollout errors, suggesting that intrinsic stochasticity provides a new perspective for understanding failures of reconstruction and serves as a diagnostic for selecting reconstruction maps. 
\end{abstract}

\vspace{1em}
\noindent\textbf{Keywords:} time-delay embedding, Takens' embedding theorem, information loss, nonlinear time-series analysis, dynamical system

\section{Introduction}
Reconstructing nonlinear dynamics from partially observed trajectories via time-delay embedding has a rich history for signal processing and data-driven modelling \cite{roux1983observation, broomhead1986extracting, hamilton2020time, kamb2020time, brunton2016discovering, brunton2017chaos, kutz2022parsimony, champion2019data, hirsh2021structured} and has been widely used in nonlinear time-series tasks, including forecasting, causal detection, and equation discovery \cite{Stark1999, Stark2003, bakarji2023discovering, sugihara2012detecting, ye2015distinguishing, gao2023causal}. Most such pipelines implicitly rely on the assumption that the reconstruction map is injective and can separate relevant states on the original attractor. However, in practice, this assumption is rarely checked. Under certain conditions \cite{takens2006detecting, sauer1991embedology}, Takens' theorem justifies that a generic observation function yields a smooth embedding, i.e., a continuous injective map from the underlying state space to the reconstructed space. However, this guarantee is generic rather than a certificate for a given dataset. With fixed measurements and a finite sampling frequency, the resulting time-delay map may fail to be injective on the relevant part of the attractor. When the time-delay map is non-injective, distinct latent states may share the same reconstructed state. If these latent states lead to distinct future states, then the reconstructed dynamics no longer admit a single-valued deterministic closure. Downstream models, e.g., the predictive models detailed in Section~\ref{Downstream task impact}, that assume such deterministic closure can then suffer an irreducible ambiguity, since the same reconstructed state may correspond to multiple incompatible future states.   

A typical example of the downstream impact of unequal reconstruction quality is Convergent Cross Mapping (CCM), a widely used causal discovery method for nonlinear time series~\cite {sugihara2012detecting}. The core idea is that if a state variable $X$ drives $Y$, then $Y$ encodes information about $X$. Thus, the time-delay reconstruction of $Y (\mathcal{M}_Y)$ can be used to make accurate predictions of the state $\mathcal{M}_X$ via a $k$-nearest neighbor ($k$-NN) approach. As the data length increases, the prediction skill (correlation $\rho$) is expected to approach a high plateau near 1 gradually. However, as demonstrated in Figure~\ref{Misleading CCM result}, if one of the time-delay maps is poorly conditioned, rendering the time-delay maps unequally informative for the two observables, CCM can yield asymmetric cross-map skill and thereby mislead causal interpretation.

\begin{figure}[!ht] 
    \centering  
    \includegraphics[width=1\linewidth]{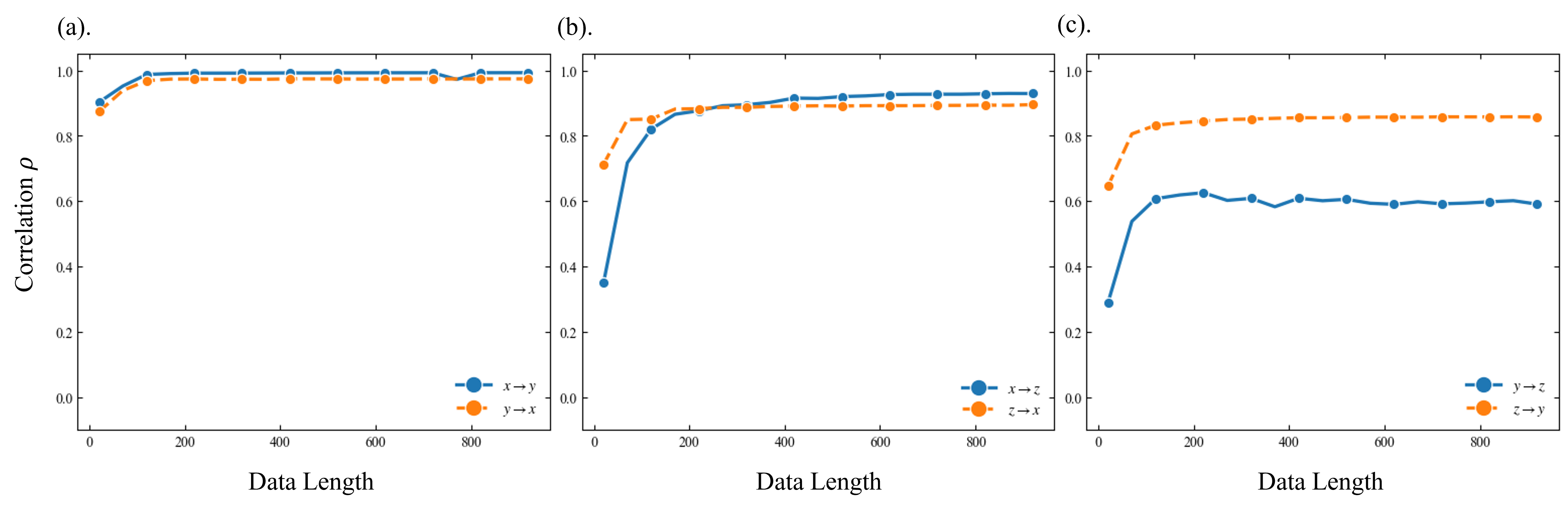}
    \caption{Performance of CCM on the R\"ossler system with dynamically coupled state variables, where CCM results should yield high cross-map scores in both directions when the time-delay reconstructions are sufficiently informative. The $Z$-based reconstruction is less informative in this example, which is consistent with the experiment results in Table~\ref{tab:future_cloud_baseline_comparison}. Each panel reports the cross-map correlation $\rho$ as a function of data length. (a). $X \Leftrightarrow Y$: both directions rapidly converge to a plateau near 1. (b). $X \Leftrightarrow Z$: both directions converge more slowly but still reach a high plateau. (c). $Y \Leftrightarrow Z$: cross-map skill is asymmetric $Z \rightarrow Y$ high, $Y \rightarrow Z$ low, indicating that the chosen time-delay reconstruction is not equally informative for both variables and can lead to information loss.} 
    \label{Misleading CCM result}
\end{figure}

Considerable effort has been devoted to analyzing the time-delay reconstructions, including operator-theoretic analyses of spectral properties~\cite{giannakis2021delay, korda2018linear, das2019delay} and practical criteria for selecting embedding parameters~\cite{pecora1995statistics, fraser1986independent}. These works provide important tools for understanding time-delay reconstructions, but they do not consider the case in which a fixed observation function yields a non-injective reconstruction. In the coordinate-projection setting considered here, different observed variables can produce reconstructions with different degrees of deterministic closure, and the difference in deterministic closure between observables is also closely correlated with topics such as the controllability or observability of complex networks~\cite{liu2013observability, liu2011controllability}.

Several efforts have also been made to investigate the quality of the time-delay reconstruction. When the system equation is available, the non-injectivity of the time-delay map is closely related to the nonlinear observability of the observed variable, which can be framed in differential geometry via the rank conditions on Lie derivatives~\cite{1101601, aguirre2005observability}. However, this framework is not data-driven, since it requires either explicit governing equations or stable Jacobian/derivative estimation. While other methods, such as SVDO~\cite {aguirre2011investigating} and DDA~\cite {gonzalez2020assessing}, are heuristic diagnostics that do not directly measure the information loss caused by non-injectivity, it remains a challenge to propose a data-driven diagnostic for measuring the ambiguity introduced by the non-deterministic nature of time-delay reconstructions.

\begin{figure}[!ht]
    \centering
    \includegraphics[width=1\linewidth]{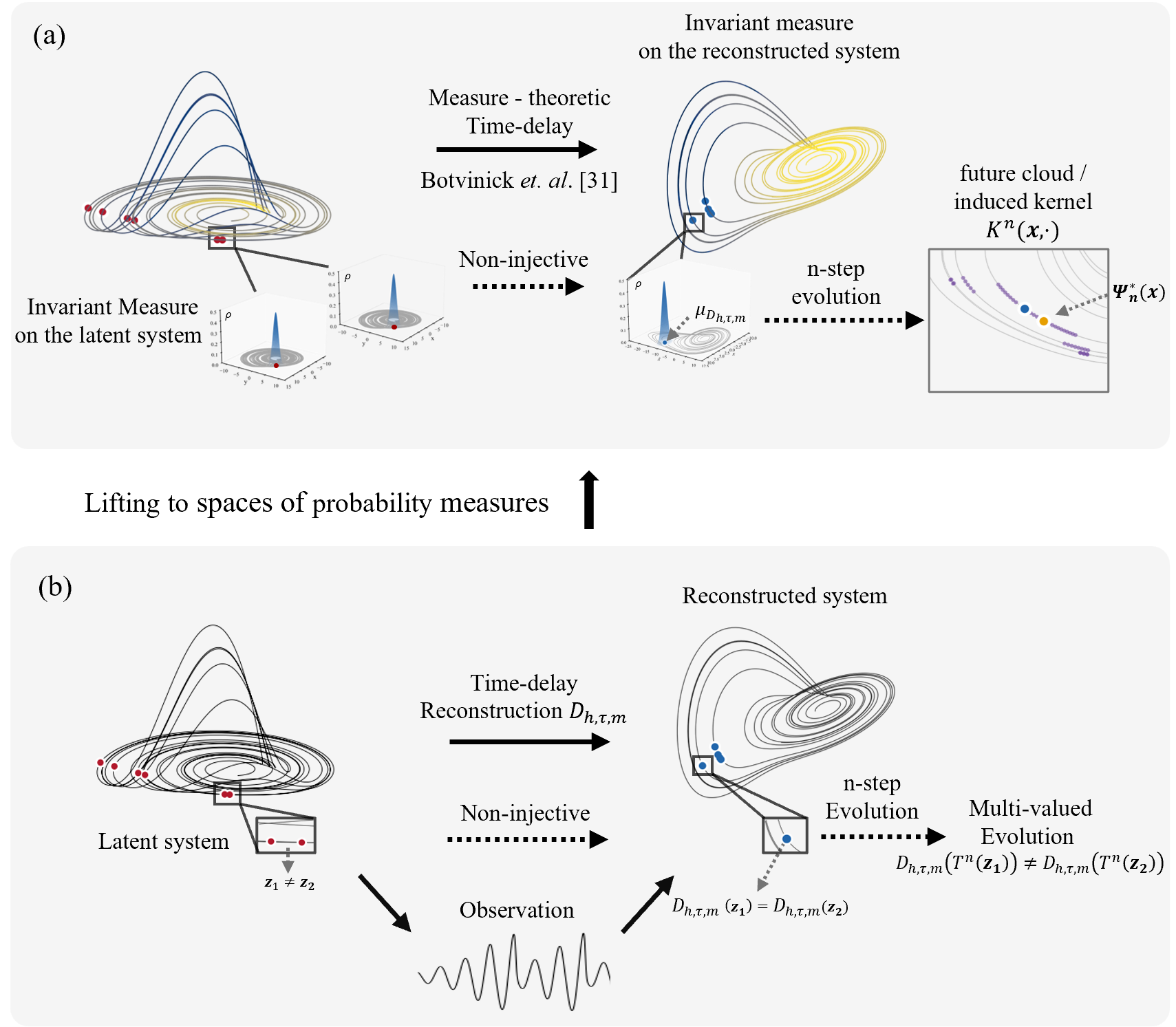}
    \caption{Overview of our measure-theoretic pipeline. (b) In the original state space, the time-delay map can be non-injective: distinct latent states $\mathbf{z}_1\neq \mathbf{z}_2$ may share the same reconstructed state $D_{h,\tau,m}(\mathbf{z}_1) = D_{h,\tau,m}(\mathbf{z}_2)$. If their $n$-step pushforward futures are different, e.g., $D_{h,\tau,m}(T^n(\mathbf{z}_{1})) \neq D_{h,\tau,m}(T^n(\mathbf{z}_{2}))$, then the induced flow does not hold a single-valued deterministic closure on the reconstructed space. (a) After lifting the dynamics to the space of probability measures, the $n$-step evolution of $\mathbf{x}$ is described by the induced kernel $K^n(\mathbf{x},\cdot)$ (the purple future cloud). When $\mathbf{x}=D_{h,\tau,m}(\mathbf{z}_1)=D_{h,\tau,m}(\mathbf{z}_2)$, this kernel can be non-degenerate or multi-modal. The intrinsic stochasticity $\mathcal{E}^*_n$ measures the deviation of $K^n(\mathbf{x},\cdot)$ from a Dirac mass, estimated by measuring the dispersion around the optimal Fr\'echet median $\Psi_n^*(\mathbf{x})$ (the orange point).}
    \label{fig:summary}
\end{figure}

Our motivation stems from the fact that dissipative systems typically admit an invariant measure, e.g., the Sinai-Ruelle-Bowen (SRB) measure, to which long-time empirical statistics along typical trajectories converge~\cite{sauer1991embedology, eckmann1985ergodic, botvinick2025invariant, botvinick2025measure}. Thus, the invariant measure on the attractor manifold provides a natural measurement for quantifying the ambiguity caused by the non-deterministic evolution of the reconstructed system. When the induced flow fails to be deterministic after an $n$-step evolution, the evolution can be represented by an induced kernel, with respect to the invariant measure, rather than as a single-valued map. In contrast, the kernel degenerates to a Dirac mass when the reconstructed dynamics admit deterministic closure after an $n$-step evolution. Moreover, another benefit is that, within the invariant measure framework, the question is well-defined, even when the geometry of the attractor manifold is highly complicated, e.g., fractal sets of zero Lebesgue measure.

In this work, we connect dynamical system reconstruction to optimal transport by lifting the induced dynamics to probability space, as illustrated in Figure~\ref{fig:summary}. We investigate the geometric mechanism of information loss: the information loss caused by the non-deterministic closure admits a mass-separation description, which is governed by the local competition between the dynamical stretching and observation curvature. To quantify the ambiguity caused by the $n$-step non-deterministic evolution, we introduce intrinsic stochasticity $\mathcal{E}_{n}^{*}$, a measure-theoretic, optimal-transport-based scalar functional defined as the average of the minimal Wasserstein-1 discrepancy between the induced $n$-step future kernel $K^{n}(\mathbf{x},\cdot)$ and its best deterministic approximation. Finally, we propose a $k$-NN based estimator $\widehat{\mathcal{E}}^{*}_{n,k,N}$ to approximate $\mathcal{E}_{n}^{*}$ under finite-resolution and finite-sample settings and numerical experiments illustrate its use for comparing candidates with different closure properties and provide a perspective to understand the failure of reconstruction maps. 

\section{Problem formulation and measure-theoretic lifting}

This section introduces the dynamical and measure-theoretic setting used throughout the paper. Lifting dynamical systems to measure spaces is a typical technique in data-driven modeling; for example, the Perron-Frobenius operator provides a measure-theoretic linear operator of the dynamical system, which propagates the evolution of the probability distributions in time \cite{brunton2021modern,yang2023optimal,botvinick2025measure}. We study the deterministic evolution of the dynamics induced by a fixed reconstruction map $F:\mathcal{M}\rightarrow \mathcal{X}$, and here we provide a brief introduction and establish standard notation on which we build in Section~\ref{Section three}. 

\subsection{Dynamical system reconstruction}\label{sec: Time-delay embedding}

Suppose the following dynamical system:
\begin{equation}\label{eq:flow_ode}
\frac{\rm{d}\mathbf{z}}{\rm{d}t} = f(\mathbf{z}),
\end{equation}
where $\mathbf{z}\in\mathcal{M}$, $\mathcal{M}$ is a $d$-dimensional compact $C^r$ manifold, and we denote the generated flow as $\{\Phi^t\}_{t\in\mathbb{R}}$ which admits an ergodic physical probability measure $\nu$ that captures the asymptotic statistical behavior of long-time trajectories. Typically, observations are measured at a fixed interval $\Delta t>0$ by a measurement function $h(\mathbf{z})\in\mathbb{R}^{l}$, and the induced discrete-time dynamical system is denoted as
\begin{equation}\label{eq:sampling_map}
T := \Phi^{\Delta t}.
\end{equation}
In general, the choice of the measurement function is limited by practical constraints, and the dimensionality may also be smaller than $d$. In this work, we are primarily interested in the case of $l = 1$, and the measurement function is the coordinate projection of the dynamical variable (e.g., $h(z_1, z_2, z_3) = z_1$); in other words, we obtain a single one-dimensional time-series $\{y_k\}_{k\ge 0}$. 

Throughout the paper, let $F: \mathcal{M} \to \mathcal{X}\subseteq\mathbb{R}^{m}$ denote a reconstruction map induced by observations. The time-delay map $D_{h, \tau, m}: \mathcal{M} \to \mathcal{X}\subseteq\mathbb{R}^{m}$ is defined by:  
\begin{equation}\label{eq:delay_map_def}
D_{h, \tau, m}(\mathbf{z})=\big(h(\mathbf{z}),\,h(T^{-\tau}\mathbf{z}),\,\ldots,\,h(T^{-(m-1)\tau}\mathbf{z})\big)^\top,
\end{equation}
and is one of the most popular choices for system reconstruction. Here, $\mathcal{X}$ denotes the reconstructed space equipped with the Euclidean metric $d_\mathcal{X}(\mathbf{x},\mathbf{x}')=\|\mathbf{x}-\mathbf{x}'\|_2$. This construction has several benefits, and most notably, Takens' embedding theorem guarantees that for a generic observable $h$ and a sufficiently large embedding dimension, the time-delay map is an embedding. 

However, this theorem does not guarantee that the time-delay map constructed from a fixed measurement function is an embedding. The genericity assumption in Takens' theorem means that the set of embedding observables is large in the chosen function space, typically in the Baire category sense, but it does not provide a certificate for any particular observable. An arbitrarily small smooth perturbation may avoid an inadequate candidate, which makes sense as a theoretical technique but is limited in practice when a fixed, given measurement function generates the dataset. For the embedding dimension, the theorem requires the dimension of the reconstructed delay coordinate, rather than implying that any arbitrary stack of $m$ consecutive samples achieves an embedding. For autonomous systems, successive time delays may be strongly correlated, especially when the lag value is small. On the other hand, when the time delay is too large, the reconstructed coordinates become decorrelated, and the time delay forms a degenerate, product-like geometry. An example is shown in Figure~\ref{fig:Increasing check}. However, the selection of delay parameters is nontrivial: increasing $m$ does not guarantee a successful embedding, while the choice of lag $\tau$ is also crucial. When $m$ is insufficient, or $h$ is degenerate, $D_{h, \tau, m}$ may be non-injective, mapping distinct states $\mathbf{z}_1\neq \mathbf{z}_2$ to the same state as $x = D_{h, \tau, m}(\mathbf{z}_1)=D_{h, \tau, m}(\mathbf{z}_2)$. 
\begin{figure}[!ht]
    \centering
    \includegraphics[width=1\linewidth]{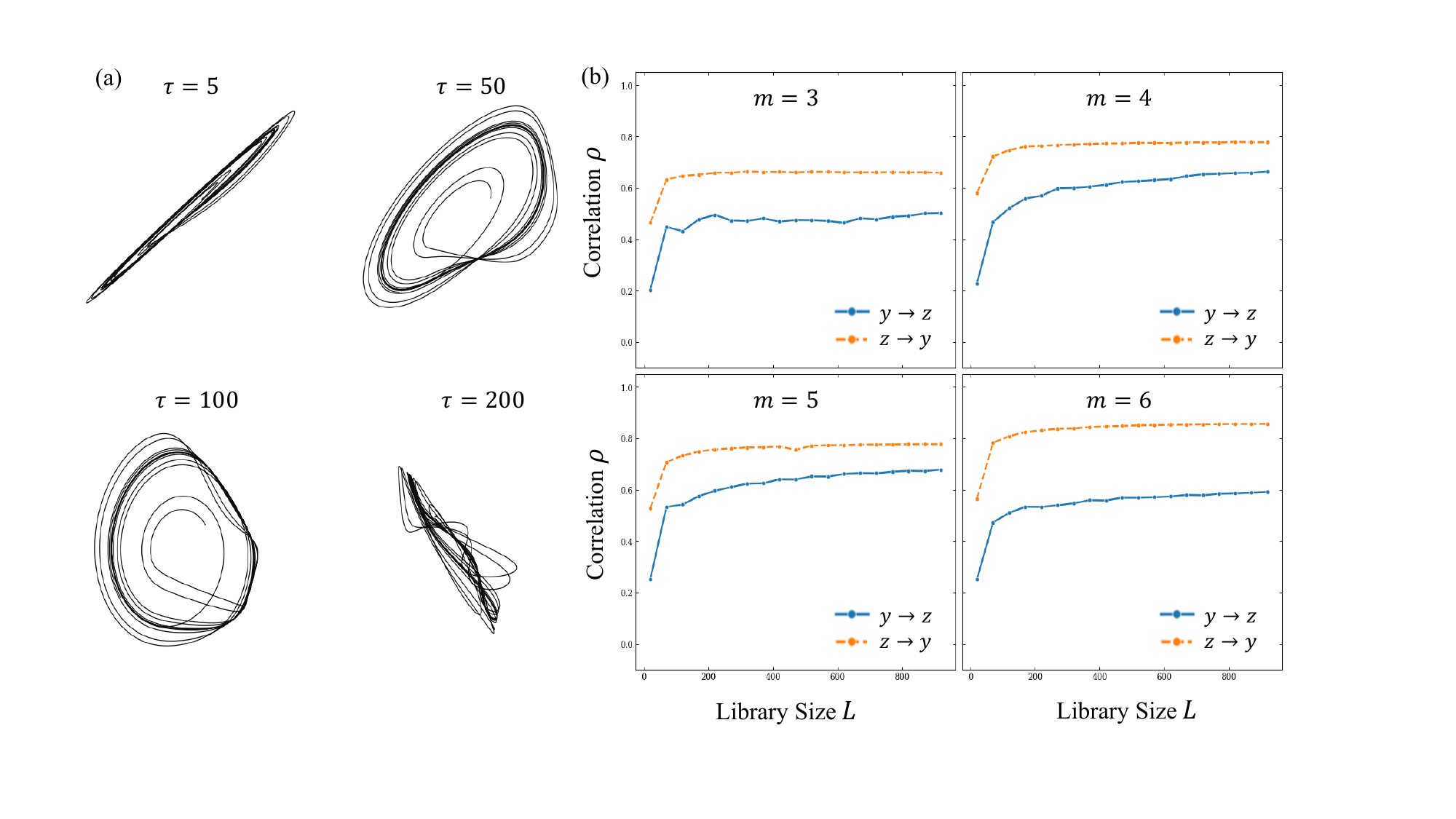}
    \caption{(a) The stretch-and-fold phenomenon as the lag value $\tau$ is increased for the R\"ossler system using the time-delay map induced by the $z_1$-coordinate projection. (b). Using the same CCM example as in Figure~\ref{Misleading CCM result}, increasing the embedding dimension for time-delay reconstructions induced by both $z_1$ and $z_3$ from $3$ to $6$ does not eliminate the non-informative phenomenon in the cross-map skill for $Y \Leftrightarrow Z$. In this example, increasing $m$ does not eliminate the asymmetric cross-map skill, indicating that larger embedding dimension alone does not necessarily resolve the finite-resolution closure problem.}
    \label{fig:Increasing check}
\end{figure}

When the governing equations are available, we also consider the differential map $D_{h,m}: \mathcal{M} \rightarrow \mathcal{X} \subseteq \mathbb{R}^m$ defined by:
\begin{equation}\label{derivative coordinate map}
    \begin{aligned}
    D_{h,m}(\mathbf{z})=\big(h(\mathbf{z}),\, \mathcal{L}_{f} h(\mathbf{z}),\, \ldots,\, \mathcal{L}_{f}^{m-1} h(\mathbf{z})\big),
    \end{aligned}
\end{equation} 
where $\mathcal{L}_{f} h=\nabla h\cdot f$ is the Lie derivative of $h$ along the vector field $f$. 

\subsection{Measure-theoretic lifting and induced kernels}

Suppose $\nu$ is an invariant measure of the dynamical system Eq.~\eqref{eq:flow_ode} on $\mathcal{M}$, and let $\mu_F:= F\#\nu \in \mathcal{P}(\mathcal{X})$ be the corresponding pushforward measure under $F$. For each reconstructed state $\mathbf{x}$, the corresponding fibre is:
\begin{equation}\label{eq:fibre}
F^{-1}(\mathbf{x})=\{\mathbf{z}\in\mathcal{M}\,:\,F(\mathbf{z})=\mathbf{x}\},
\end{equation}   
which may contain multiple latent states when $F$ is non-injective, such that the same $\mathbf{x}$ may correspond to finite distinct latent states. By the disintegration theorem~\cite{ambrosio2005gradient}, there exists a $\mu_F$-a.e. uniquely determined family of conditional probability measures $\{\nu_{\mathbf{x},F}\}_{\mathbf{x}\in \mathcal{X}} \subset \mathcal{P}(\mathcal{M})$ supported on the fibres $F^{-1}(\mathbf{x})$, such that, for any bounded Borel measurable map $\varphi:\mathcal{M}\to\mathbb{R}$, we have
\begin{equation}\label{eq:disintegration}
\int_{\mathcal{M}}\varphi(\mathbf{z})\,\rm{d}\nu(\mathbf{z})
=
\int_\mathcal{X}\left(\int_{F^{-1}(\mathbf{x})}\varphi(\mathbf{z})\,\rm{d}\nu_{\mathbf{x},F}(\mathbf{z})\right)\rm{d}\mu_F(\mathbf{x}).
\end{equation}
For any horizon $n\ge 1$, the induced $n$-step future kernel $K_F^n(\mathbf{x}, \cdot)$ given the current state $\mathbf{x}$ is defined as:
\begin{equation}\label{eq:kernel_def}
K_F^n(\mathbf{x},\cdot) := (F\circ T^n)\#\nu^F_\mathbf{x}.
\end{equation}
Equivalently, for any bounded Borel measurable map $\psi:\mathcal{X}\to\mathbb{R}$, we have
\begin{equation}\label{eq:kernel_action}
\int_\mathcal{X} \psi(\mathbf{x}')\,K_F^n(\mathbf{x},\rm{d}\mathbf{x}')
=
\int_{F^{-1}(\mathbf{x})} \psi(F(T^n \mathbf{z}))\,\rm{d}\nu_{\mathbf{x},F}(\mathbf{z}).
\end{equation}

If the time-delay map $F$ is injective on a set of full $\nu$-measure, then $\nu_{\mathbf{x},F}$ is a Dirac mass for $\mu_F$-a.e. $\mathbf{x}$, and consequently $K_F^n(\mathbf{x},\cdot)$ is concentrated at the unique $n$-step future. Conversely, if the conditional measure $\nu_{\mathbf{x},F}$ is non-Dirac and \(F\circ T^n\) is not $\nu_{\mathbf{x},F}$-a.e. constant on a set of positive $\mu_F$-measure, then $K_F^{n}(\mathbf{x},\cdot)$ is non-degenerate. The spread of the kernel $K_F^{n}(\mathbf{x},\cdot)$ reflects the ambiguity in the $n$-step future arising from the failure of a single-valued, deterministic closure.  

\section{Intrinsic Stochasticity}\label{Section three}



\subsection{Intrinsic Stochasticity}
\begin{definition}[Intrinsic Stochasticity]\label{def:En_pointwise}

Assume $\mathcal{X}$ is a Polish metric space and $K_F^n(\mathbf{x}, \cdot)\in \mathcal{P}_1(\mathcal{X})$ for $\mu_F$-a.e. $\mathbf{x}$. For each $n\ge 1$ and $y\in\mathcal{X}$, define the pointwise $\mathcal{W}_1$ approximation loss:
\begin{equation}
\ell_n(\mathbf{x},\mathbf{y}) := \int_{\mathcal X} d_{\mathcal X}(y,y')K^n_F(x,dy')
=
W_1\!\left(K^n_F(x,\cdot),\delta_y\right).
\end{equation}
For each $n\ge1$, we define the minimal pointwise risk:
\begin{equation}\label{eq:intrinsic_stochasticity}
m_{n,F}(\mathbf{x}):=\inf_{\mathbf{y}\in \mathcal{X}}\ell_n(\mathbf{x},\mathbf{y})
\end{equation}
and the Intrinsic Stochasticity is defined as: 
\begin{equation}\label{eq:En_def}
\mathcal E_{n,F}^{*} := \int_{\mathcal{X}} m_{n,F}(\mathbf{x})\,\rm{d}\mu_F(\mathbf{x}). 
\end{equation}
When the infimum is attained, the minimizer is the Fr\'echet/Geometric Median $\Psi_n^*(\mathbf{x})$: 
\begin{equation}
    \Psi_n^*(\mathbf{x})\in\arg\min_{\mathbf{y}\in \mathcal{X}}\ell_n(\mathbf{x},\mathbf{y}).
\end{equation}
\end{definition}

$\mathcal{E}^{*}_{n,F}$ quantifies the information loss arising from the non-deterministic closure. Since $\ell_n(\mathbf{x}, \mathbf{y})$ is based on $d_\mathcal{X}$, the corresponding $\mathcal{W}_1$-optimal deterministic prediction is the geometric median $\Psi_n^*(\mathbf{x})$. When the induced flow is non-deterministic, the kernel $K_{F}^n(\mathbf{x},\cdot)$ may become multimodal, e.g., splitting into two distinct branches. Here, branches refer to the local inverse branches of $F$, and their corresponding $n$-step pushforward futures under $F\circ T^n$. The conditional mean of $K^n_F(\mathbf x,\cdot)$, which minimizes the squared prediction risk, may lie between branches, possibly outside the reconstructed attractor or in a region with negligible invariant density. In contrast, a geometric median $\Psi_n^*$ is a natural $\mathcal{W}_1$-optimal point representative and is less sensitive to remote branches. For clarity, when the reconstruction map is fixed, we suppress the subscript $F$ and write $K^n(\mathbf{x},\cdot)$, $m_{n}$, $\mu$, $\mathcal{E}_{n}^*$.

\begin{proposition}\label{prop:intrinsic_stochasticity}
\[
\mathcal E_n^*=0
\quad\Longleftrightarrow\quad
K^n(\mathbf x,\cdot)\text{ is a Dirac measure for }\mu\text{-a.e. }\mathbf x .
\]
\end{proposition}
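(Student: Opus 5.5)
The plan is to prove the two directions separately, working pointwise in $\mathbf{x}$ and only at the end integrating against $\mu$. We use that $m_n(\mathbf{x})\ge 0$, since it is an infimum of integrals of the nonnegative function $d_{\mathcal X}$. Consequently $\mathcal E_n^*=\int m_n\,d\mu=0$ holds if and only if $m_n(\mathbf{x})=0$ for $\mu$-a.e. $\mathbf{x}$. This uses measurability of $\mathbf{x}\mapsto m_n(\mathbf{x})$, which I will check briefly. The map $\mathbf{x}\mapsto \ell_n(\mathbf{x},\mathbf{y})$ is measurable because $K^n$ is a kernel. For fixed $\mathbf{x}$, the map $\mathbf{y}\mapsto\ell_n(\mathbf{x},\mathbf{y})$ is $1$-Lipschitz, by the triangle inequality together with $K^n(\mathbf{x},\cdot)\in\mathcal P_1$. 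Hence the infimum may be taken over a countable dense subset of the Polish space $\mathcal X$, and $m_n$ is measurable. The proposition therefore reduces to a pointwise statement: for a fixed $\mathbf{x}$ with $K^n(\mathbf{x},\cdot)\in\mathcal P_1(\mathcal X)$, $m_n(\mathbf{x})=0$ if and only if $K^n(\mathbf{x},\cdot)$ is a Dirac mass.

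For ($\Leftarrow$), suppose $K^n(\mathbf{x},\cdot)=\delta_{\mathbf{y}_0}$. Then $\ell_n(\mathbf{x},\mathbf{y}_0)=d_{\mathcal X}(\mathbf{y}_0,\mathbf{y}_0)=0$, so $m_n(\mathbf{x})=0$. If this holds on a $\mu$-full set, then $\mathcal E_n^*=0$.

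For ($\Rightarrow$), the main point is that the infimum need not be attained a priori, so we cannot immediately pick a $\mathbf{y}$ with $\ell_n(\mathbf{x},\mathbf{y})=0$. Instead I would take a minimizing sequence $\mathbf{y}_k$ with $\ell_n(\mathbf{x},\mathbf{y}_k)=W_1(K^n(\mathbf{x},\cdot),\delta_{\mathbf{y}_k})\to 0$. Since $W_1$ is a metric on $\mathcal P_1$, we get
\[
d_{\mathcal X}(\mathbf{y}_k,\mathbf{y}_j)=W_1(\delta_{\mathbf{y}_k},\delta_{\mathbf{y}_j})\le \ell_n(\mathbf{x},\mathbf{y}_k)+\ell_n(\mathbf{x},\mathbf{y}_j),
\]
so $(\mathbf{y}_k)$ is Cauchy. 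By completeness (Polish), $\mathbf{y}_k\to\mathbf{y}_0$. The $1$-Lipschitz property of $\ell_n(\mathbf{x},\cdot)$ then gives $\ell_n(\mathbf{x},\mathbf{y}_0)=0$, i.e. $\int d_{\mathcal X}(\mathbf{y}_0,\mathbf{y}')\,K^n(\mathbf{x},d\mathbf{y}')=0$. Because the integrand is nonnegative, it vanishes $K^n(\mathbf{x},\cdot)$-a.e. Thus $K^n(\mathbf{x},\{\mathbf{y}_0\})=1$, and $K^n(\mathbf{x},\cdot)=\delta_{\mathbf{y}_0}$. This incidentally shows that the Fréchet median exists and equals $\mathbf{y}_0$ in this case.

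The only delicate step is the non-attainment of the infimum in ($\Rightarrow$), handled by the Cauchy argument above; everything else is routine. In the Euclidean setting $\mathcal X\subseteq\mathbb R^m$, one can alternatively argue via coercivity of $\ell_n(\mathbf{x},\cdot)$, but then closedness of $\mathcal X$ is needed, so the completeness argument is the cleaner route. Finally, the exceptional $\mu$-null sets must be combined: the set where $K^n(\mathbf{x},\cdot)\notin\mathcal P_1$ and the set where $m_n(\mathbf{x})>0$. Their union is still $\mu$-null, which gives the claim for $\mu$-a.e. $\mathbf{x}$.
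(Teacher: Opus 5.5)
Your proof is correct. The reverse direction and the reduction to a pointwise statement match the paper, but your forward direction takes a different route.

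The paper never looks for a minimizer. It picks $\mathbf y_\varepsilon$ with $\ell_n(\mathbf x,\mathbf y_\varepsilon)<\varepsilon$ and integrates $d_{\mathcal X}(\mathbf u,\mathbf v)\le d_{\mathcal X}(\mathbf u,\mathbf y_\varepsilon)+d_{\mathcal X}(\mathbf y_\varepsilon,\mathbf v)$ against $P\otimes P$. This bounds the mean pairwise distance by $2\varepsilon$, so $P\otimes P$ is concentrated on the diagonal and $P$ is Dirac.

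You integrate the same triangle inequality against $P$ alone, over pairs of candidate centres, which makes the minimizing sequence Cauchy. You then use completeness and the $1$-Lipschitz continuity of $\ell_n(\mathbf x,\cdot)$ to produce a point of zero cost.

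Your version stays with single integrals, so it bypasses the small Fubini step hidden in ``concentrated on the diagonal, hence Dirac''. You also justify the measurability of $m_n$, which the paper uses tacitly. The paper's version, in turn, needs no completeness at all.

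On that point, your closing remark is slightly off. For $\mathcal X\subseteq\mathbb R^m$ with the Euclidean metric, completeness of $(\mathcal X,d_{\mathcal X})$ is exactly closedness. So your Cauchy step rests on the same hypothesis as the coercivity argument. It is justified here only because ``Polish metric space'' is read as complete under $d_{\mathcal X}$ itself, which is the usual optimal-transport convention.

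You can remove completeness within your own framework. Since $d_{\mathcal X}(\mathbf y_k,\cdot)\to 0$ in $L^1(P)$, a subsequence converges to $0$ for $P$-a.e.\ $\mathbf u$. That means $\mathbf y_{k_j}\to\mathbf u$ for $P$-a.e.\ $\mathbf u$, and uniqueness of limits forces all such $\mathbf u$ to be a single point, which is the atom.
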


\begin{proof}
By Definition~\eqref{def:En_pointwise} and Eq.~\eqref{eq:En_def}, $\mathcal E_n^*=0$ implies $m_n(\mathbf x)=0$ for $\mu$-a.e. $\mathbf x$. Fix an $\mathbf x$, and let $P:=K^n(\mathbf x,\cdot)$, then for each $\varepsilon>0$, there exists a $\mathbf y_\varepsilon\in\mathcal X$ such that
\begin{equation}
\int_{\mathcal X}
d_{\mathcal X}(\mathbf y_\varepsilon,\mathbf u)\,
P(\mathrm d\mathbf u)
<\varepsilon.    
\end{equation}
For any $\mathbf{u}, \mathbf{v} \in \mathcal{X}$, we have $d_{\mathcal X}(\mathbf u,\mathbf v)\le d_{\mathcal X}(\mathbf u,\mathbf y_\varepsilon)+d_{\mathcal X}(\mathbf y_\varepsilon,\mathbf v)$ by triangle inequality and integrating over $\mathcal X\times\mathcal X$ with respect to \(P(\mathrm d\mathbf u)P(\mathrm d\mathbf v)\) yields
\begin{equation}
\int_{\mathcal X}\int_{\mathcal X} d_{\mathcal X}(\mathbf u,\mathbf v)\, P(\mathrm d\mathbf u)P(\mathrm d\mathbf v) \le 2\int_{\mathcal X}d_{\mathcal X}(\mathbf y_\varepsilon,\mathbf u)\,P(\mathrm d\mathbf u)<2\varepsilon.
\end{equation}
Letting \(\varepsilon\to0\), we obtain $\int_{\mathcal X}\int_{\mathcal X}d_{\mathcal X}(\mathbf u,\mathbf v)\, P(\mathrm d\mathbf u)P(\mathrm d\mathbf v)=0$ and \(d_{\mathcal X}(\mathbf u,\mathbf v)=0\) for
\(P\otimes P\)-a.e. \((\mathbf u,\mathbf v)\). Since \(d_{\mathcal X}\) is a
metric, \(P\otimes P\) is concentrated on the diagonal, which implies that
\(P\) is a Dirac measure. Therefore
\(K^n(\mathbf x,\cdot)\) is Dirac for \(\mu\)-a.e. \(\mathbf x\).

Conversely, suppose that $K^n(\mathbf x,\cdot)$ is a Dirac measure for $\mu$-a.e. $\mathbf x$. Then, for each $\mathbf x$, there exists a point $\mathbf y_{\mathbf x}\in\mathcal X$ such that $K^n(\mathbf{x},\cdot)=\delta_{\mathbf y_{\mathbf x}}$, therefore, 
\begin{equation}
m_n(\mathbf x) \le \int_{\mathcal X} d_{\mathcal X}(\mathbf y_{\mathbf x},\mathbf u)\, \delta_{\mathbf y_{\mathbf x}}(\mathrm d\mathbf u)=0.
\end{equation}
Since $m_n(\mathbf x)\ge0$, we have $m_n(\mathbf x)=0$ for $\mu$-a.e. $\mathbf{x}$, thus $\mathcal E_n^* =\int_{\mathcal X}m_n(\mathbf x)\,\mathrm d\mu(\mathbf x)=0$.
\end{proof}

Proposition~\ref{prop:intrinsic_stochasticity} shows that $\mathcal{E}_n^* = 0$ serves as a certificate of $n$-step deterministic closure. However, $\mathcal{E}^{*}_{n}$ is not a certificate of non-injectivity. For instance, in the Lorenz63 system, the $z_3$-coordinate is invariant under the symmetry $(z_1,z_2,z_3) \mapsto (-z_1, -z_2, z_3)$ such that the time-delay map $D_{z_3, \tau, m}$ inherits this symmetry as a two-to-one non-injective map~\cite{duan2025causal}. In this case, the $n$-step future kernel can be Dirac while the time-delay map is non-injective.

\subsection{Geometric mechanism of information loss}\label{sec:geom_mech}

Here, we are concerned with the geometric mechanism that governs the occurrence of the non-deterministic closure. When the reconstructed system admits finite future branches, $\mathcal{E}_n^*$ can be decomposed into branch probabilities and future-branch separations, which shows that the local balance between the \textit{dynamical separation} generated by the $n$-step evolution and the \textit{observation curvature} of the reconstruction map $F$ determines the non-deterministic closure, rather than the prediction model itself. 

\subsubsection{Decomposition of folding}

In this work, we assume that the invariant measure $\nu$ admits SRB conditional measures from the disintegration of $\nu$ along the local unstable plaques. In particular, for a local unstable plaque $U$, the conditional measure $\nu^u_U$ is absolutely continuous with respect to the induced Riemannian volume $\mathcal{V}_U$: 
\begin{equation}
d\nu_{U}^u(\mathbf z)
=
\rho_{U}^u(\mathbf z)\,d\mathcal{V}_U(\mathbf z),
\end{equation}
where $\rho_U^u$ denotes the density on $U$. The local branch decomposition is illustrated in Figure~\ref{fig:local space decomposition}, which is a schematic two-dimensional cross-section of the local plaque and fibre geometry.

\begin{figure}\label{Latent Space Evolution}
    \centering
    \includegraphics[width=1\linewidth]{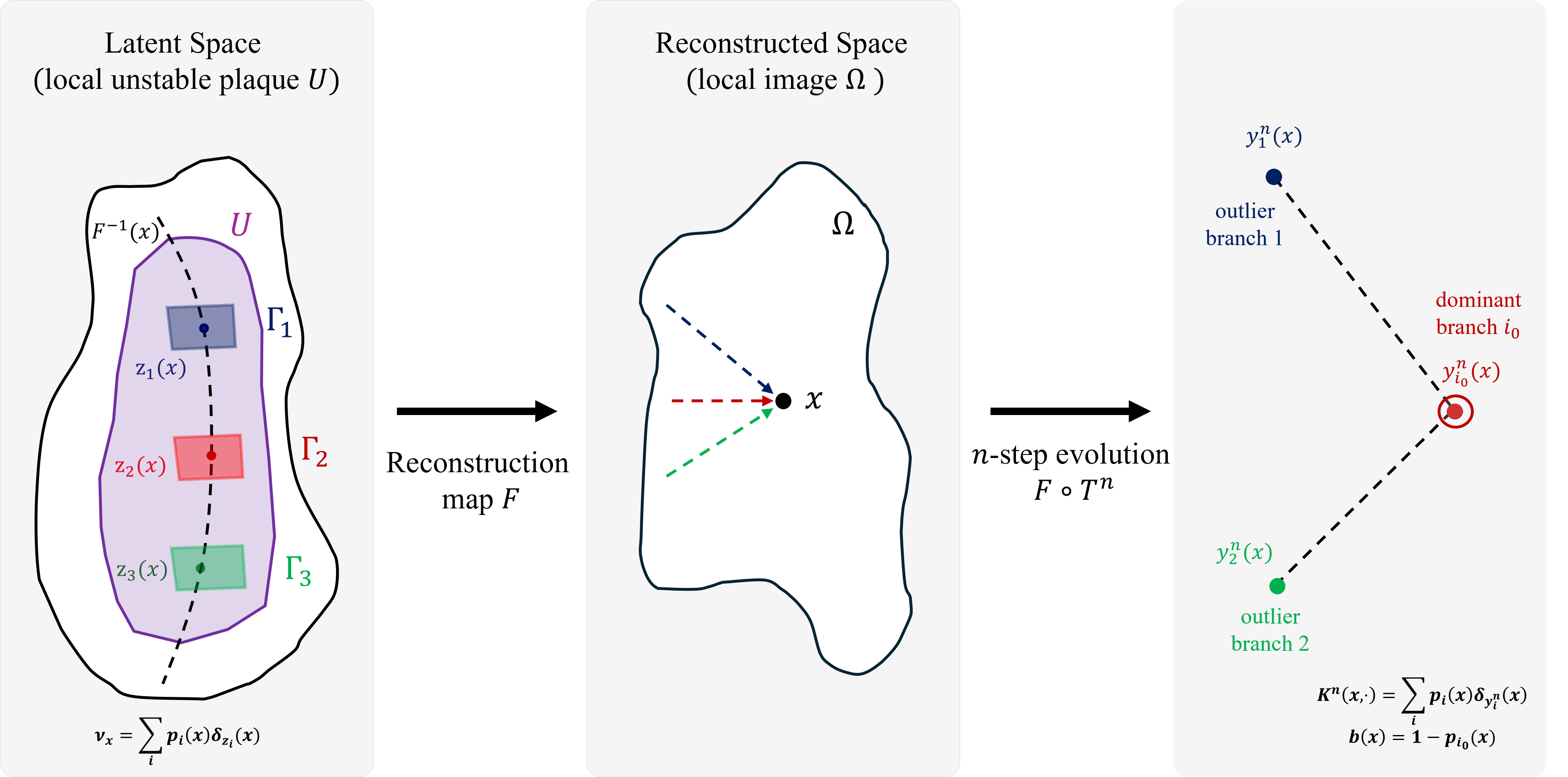}
    \caption{Local branch decomposition and future-branch separation. A local unstable plaque $U \subset \mathcal{M}$ contains local branch patches $\Gamma_i = \{\mathbf{z}_i(\mathbf{x}):\mathbf{x} \in \Omega\}$. For a given $\mathbf{x}$, the fibre $F^{-1}(\mathbf{x})$ intersects these branches at finitely many preimages $\mathbf{z}_i(\mathbf{x})$ with conditional weights $p_i(\mathbf{x})$, inducing the multimodal future kernel $K^n(\mathbf{x}, \cdot)$. Under the $n$-step pushforward, these atoms are mapped to $\mathbf{y}_i^n(\mathbf{x}) = F(T^n\mathbf{z}_i(\mathbf{x}))$.}
    \label{fig:local space decomposition}
\end{figure}

Fixing $\mathbf{x}$, we assume that the fibre $F^{-1}(\mathbf{x})$ contains finite distinct latent states $\{\mathbf{z}_{i}\}_{i=1}^{N}$ such that $F^{-1}(\mathbf{x})=\{\mathbf z_1(\mathbf{x}),\ldots,\mathbf z_N(\mathbf{x})\}$, where $\mathbf{z}_i(\mathbf{x})$ is the $i$-th latent preimage of $\mathbf{x}$ under $F$. Then the conditional measure $\nu_{\mathbf{x}}$ can be written as a finite atomic measure:
\begin{equation}
\nu_\mathbf{x} = \sum_{i=1}^N p_i(\mathbf{x}) \delta_{\mathbf{z}_i(\mathbf{x})},
\quad
p_i(\mathbf{x})\ge 0,\quad
\sum_{i=1}^N p_i(\mathbf{x})=1,
\end{equation}
which induces a \textit{multimodal future kernel}. Pushing $\nu_{\mathbf{x}}$ forward by $F\circ T^n$, the induced $n$-step future kernel splits into discrete branches:
\begin{equation}\label{eq:Kn_discrete}
K^n(\mathbf{x},\cdot) = \sum_{i=1}^N p_i(\mathbf{x})\,\delta_{\mathbf{y}_i^n(\mathbf{x})}, \qquad \mathbf{y}_i^n(\mathbf{x}) := F(T^n \mathbf{z}_i(\mathbf{x})).
\end{equation}

Now we focus on a local unstable plaque $U$ and fix a local image set $\Omega\subset\mathcal{X}$ with induced Riemannian volume $\mathcal{V}_{\Omega}$, such that, for every $\mathbf{x}\in\Omega$, the fibre $F^{-1}(\mathbf{x})\cap U$ consists of $N$ points $\{\mathbf z_1(\mathbf x),\ldots,\mathbf z_N(\mathbf x)\}$. For each $i=1,\ldots,N$, denote $\Gamma_i := \{\mathbf z_i(\mathbf x):\mathbf x\in\Omega\}\subset U$, and $F_i:=F|_{\Gamma_i}$. We assume that each $\Gamma_i$ is a full-dimensional $C^1$ submanifold of $U$, and $F_i:\Gamma_i\to\Omega$ is a $C^1$ diffeomorphism. Thus, the inverse branch $\mathbf z_i(\mathbf x)=F_i^{-1}(\mathbf x)$ is well-defined for every $\mathbf x\in\Omega$. 

Let $\nu_i:=\nu_U^u\!\restriction_{\Gamma_i}$ denote the unnormalized restriction of the unstable conditional measure to $\Gamma_i$ and $\mu_i:=(F_i)\#\nu_i$ denote the corresponding pushforward measure. Since $\nu_U^u$ is absolutely continuous with respect
to the induced Riemannian volume on \(U\), we have
\begin{equation}
d\nu_i(\mathbf z) = \rho_i^u(\mathbf z)\,d\mathcal V_{\Gamma_i}(\mathbf z),
\end{equation}
where $\mathcal V_{\Gamma_i}$ is the induced Riemannian volume on $\Gamma_i$ and $\rho_i^u$ denotes the restriction of $\rho_U^u$ to $\Gamma_i$. By the change of variables for $F_i$, 
\begin{equation}
\frac{d\mu_i}{d\mathcal V_\Omega}(\mathbf x)=\frac{\rho_i^u(\mathbf z_i(\mathbf x))}{J_{\Gamma_i}F(\mathbf z_i(\mathbf x))},
\end{equation}
where \(\mathbf z_i(\mathbf x)=F_i^{-1}(\mathbf x)\), and
$J_{\Gamma_i}F(\mathbf z):=\sqrt{\det\left[\left(D_{\Gamma_i}F(\mathbf z)\right)^*D_{\Gamma_i}F(\mathbf z)\right]}$ is the tangential Jacobian of $F$ along $\Gamma_i$. Define the local measure induced by $N$ branches as $\mu_\Omega:=\sum_{j=1}^N\mu_j$. Therefore, the conditional probability of the $i$-th branch at $\mathbf{x}\in\Omega$ is
\begin{equation}
p_i(\mathbf x) =
\frac{d\mu_i}{d\mu_\Omega}(\mathbf x)
=
\frac{
\dfrac{d\mu_i}{d\mathcal V_\Omega}(\mathbf x)
}{
\sum_{j=1}^N
\dfrac{d\mu_j}{d\mathcal V_\Omega}(\mathbf x)
} 
= 
\frac{
\rho_i^u(\mathbf z_i(\mathbf x))/
J_{\Gamma_i}F(\mathbf z_i(\mathbf x))
}{
\sum_{j=1}^N
\rho_j^u(\mathbf z_j(\mathbf x))/
J_{\Gamma_j}F(\mathbf z_j(\mathbf x))
}.
\end{equation}
Defining $w_i(\mathbf x):=\frac{\rho_i^u(\mathbf z_i(\mathbf x))}{J_{\Gamma_i}F(\mathbf z_i(\mathbf x))}$, we obtain
\begin{equation}\label{eq:weight_scaling}
p_i(\mathbf x) = \frac{w_i(\mathbf x)}{\sum_{j=1}^N w_j(\mathbf x)}.
\end{equation}

Equation \eqref{eq:weight_scaling} implies that the branch probability is determined by the relative local weights $w_i(\mathbf x)$. Thus, a larger branch weight may arise from a higher unstable density $\rho_i^u$, a smaller tangential area Jacobian $J_{\Gamma_i}F$, or both. In geometric terms, this reflects either more frequent visits of the system to the corresponding branch, stronger compression under the reconstruction map $F$, or both, when the $n$-step future evolution is non-deterministic. Next, we show that the competition between dynamical stretching and the curvature penalty of $F$ determines the extent to which these branches drift apart. 

\subsubsection{Competition: Stretching vs. Curvature}

To quantify the separation between competing future branches, fix $\mathbf{x}\in\Omega$ such that $F^{-1}(\mathbf{x})$ has finitely many local branches and let 
\begin{equation}
    i_0(\mathbf{x})\in\displaystyle\operatorname*{argmax}_{i\in\{1,\cdots,N\}}p_i(\mathbf{x})
\end{equation}
be a dominant branch, the one with maximal conditional probability $p_i(\mathbf{x})$. For each $i$, define its $n$-step future as $\boldsymbol{\xi}_i^n(\mathbf x):=T^n\mathbf z_i(\mathbf x)$. For each competing branch $j\neq i_0$, define the displacement vector between the two $n$-step future states by
\begin{equation}
\boldsymbol{\eta}_{n,j}(\mathbf x)
:=
\boldsymbol{\xi}_j^n(\mathbf x)
-
\boldsymbol{\xi}_{i_0}^n(\mathbf x),
\end{equation}
with length denoted by $r_{n,j}(\mathbf x):=\|\boldsymbol{\eta}_{n,j}(\mathbf x)\|$. The pairwise separation between the dominant branch and the competing branch \(j\) is
\begin{equation}
\Delta_{n,j}(\mathbf x)
:=
\left\|
F\!\left(\boldsymbol{\xi}_j^n(\mathbf x)\right)
-
F\!\left(\boldsymbol{\xi}_{i_0}^n(\mathbf x)\right)
\right\|_{\mathcal X}.
\end{equation}
The nearest competing future-branch separation is then
\begin{equation}\label{eq: nearest_competing_future_branch}
\Delta_n^-(\mathbf x)
:=
\min_{j\neq i_0}
\Delta_{n,j}(\mathbf x).
\end{equation}

Assume that $F$ is $C^2$ on the neighborhood containing the future branch points considered above. Along the relevant branch directions, define 
\begin{equation}\label{eq:alpha_F}
\alpha_{F,n}(\mathbf{x}):= \min_{j\neq i_0}\frac{\left\|\mathrm DF\!\left(\boldsymbol{\xi}_{i_0}^n(\mathbf x)\right)\boldsymbol{\eta}_{n,j}(\mathbf x)\right\|_{\mathcal X}}{r_{n,j}(\mathbf x)},
\end{equation}
where branches with $r_{n,j}(\mathbf x)=0$ are excluded from the minimum. Thus, $\alpha_{F,n}$ quantifies the smallest first-order separation of $F$ along the competing branch directions, and we assume that the second-order variation of $F$ along these branch directions is bounded by 
\begin{equation}~\label{eq:K_Hess}
K_{\mathrm{Hess}}(\mathbf{x})
:=
\sup_{j\neq i_0}
\sup_{0\le \theta\le 1}
\left\|
\mathrm D^2F
\left(
\boldsymbol{\xi}_{i_0}^n(\mathbf x)
+
\theta\boldsymbol{\eta}_{n,j}(\mathbf x)
\right)
\right\|_{\mathrm{op}},
\end{equation}
where $\|\mathrm D^2F(\xi)\|_{\mathrm{op}}$ denotes the operator norm of the Hessian viewed as a bilinear map. Then, taking the Taylor expansion of $F$ around $\boldsymbol{\xi}_{i_0}^n(\mathbf x)$, we have
\begin{equation}
F\!\left(\boldsymbol{\xi}_j^n(\mathbf x)\right)
-
F\!\left(\boldsymbol{\xi}_{i_0}^n(\mathbf x)\right)
=
\mathrm DF\!\left(\boldsymbol{\xi}_{i_0}^n(\mathbf x)\right)
\boldsymbol{\eta}_{n,j}(\mathbf x)
+
R_{n,j}(\mathbf x),
\end{equation}
where the Taylor remainder satisfies
\begin{equation}
\left\|
R_{n,j}(\mathbf x)
\right\|_{\mathcal X}
\le
\frac{K_{\mathrm{Hess}}(\mathbf{x})}{2}
r_{n,j}(\mathbf x)^2.
\end{equation}
Using the reverse triangle inequality, we have
\begin{equation}
\Delta_{n,j}(\mathbf x)
\ge
\left\|
\mathrm DF\!\left(\boldsymbol{\xi}_{i_0}^n(\mathbf x)\right)
\boldsymbol{\eta}_{n,j}(\mathbf x)
\right\|_{\mathcal X}
-
\left\|
R_{n,j}(\mathbf x)
\right\|_{\mathcal X}.
\end{equation}
By the definitions~\eqref{eq:alpha_F} and~\eqref{eq:K_Hess}, we obtain the pairwise lower bound
\begin{equation}
\Delta_{n,j}(\mathbf x) \ge \alpha_{F,n}(\mathbf x) r_{n,j}(\mathbf x) - \frac{K_{\mathrm{Hess}}(\mathbf x)}{2} r_{n,j}(\mathbf x)^2.
\end{equation}
Taking the minimum over all competing branches $j\neq i_0$ gives
\begin{equation}\label{eq:branch_separation_bound}
\Delta_n^-(\mathbf x)
\ge
\min_{j\neq i_0}
\left\{\alpha_{F,n}(\mathbf{x}) r_{n,j}(\mathbf x)-\frac{K_{\mathrm{Hess}}(\mathbf x)}{2}r_{n,j}(\mathbf x)^2\right\}.
\end{equation}

Equation~\eqref{eq:branch_separation_bound} shows that the local separation between the dominant future branch and a competing future branch is governed by the competition of two geometric effects. The factor $r_{n,j}(\mathbf{x})$ captures the dynamical separation generated by $T^n$, whereas $\alpha_{F,n}(\mathbf{x})$ measures the first-order visibility of this separation through $F$. Thus, $\alpha_{F,n}(\mathbf{x})r_{n,j}(\mathbf{x})$ is the first-order observed branch separation, while the term $(K_{\mathrm{Hess}}(\mathbf x)/2)r_{n,j}(\mathbf x)^2$ gives the second-order curvature penalty.


\subsubsection{The Scaling Law of Information Loss}
Combining the branch-weight decomposition in Eq.~\eqref{eq:weight_scaling} and the future-branch separation in Eq.~\eqref{eq:branch_separation_bound}, we now relate the pointwise risk $m_n(\mathbf{x})$, and hence the intrinsic stochasticity $\mathcal{E}^{*}_{n}$, to the product of branch mass and separation. For fixed $\mathbf{x}\in\Omega$ with the dominant branch $i_0=i_0(\mathbf x)$, we define the outlier mass by:
\begin{equation}\label{outlier mass}
b(\mathbf{x}) := 1 - p_{i_0}(\mathbf{x}) = \sum_{j\neq i_0}p_j(\mathbf x).
\end{equation}
For $b(\mathbf{x}) >0$, define the weighted dominant-to-outlier separation by:
\begin{equation}\label{eq:weighted_separation}
\overline{\Delta}_n(\mathbf x) := \frac{1}{b(\mathbf x)}\sum_{j\neq i_0}p_j(\mathbf x)d_{\mathcal X}\left(\mathbf y_{i_0}^n(\mathbf x),\mathbf y_j^n(\mathbf x)\right).
\end{equation}
If $b(\mathbf x)=0$, set $\overline{\Delta}_n(\mathbf x):=0$. Consistent with the notation in Eq.~\eqref{eq: nearest_competing_future_branch}, define the nearest and largest dominant-to-outlier separations:
\begin{equation}
\Delta_n^-(\mathbf x):=\min_{j\neq i_0} d_{\mathcal X} \left( \mathbf y_{i_0}^n(\mathbf x), \mathbf y_j^n(\mathbf x) \right), \quad \text{and} \quad \Delta_n^+(\mathbf x) :=\max_{j\neq i_0} d_{\mathcal X}\left(\mathbf y_{i_0}^n(\mathbf x),\mathbf y_j^n(\mathbf x)\right).
\end{equation}

\begin{proposition}[Dominant-branch mass-separation]
\label{prop:dominant_mass_separation}
Assume that the discrete future kernel admits the form:
\begin{equation}
K^n(\mathbf x,\cdot) =
\sum_{i=1}^{N} p_i(\mathbf x)\, \delta_{\mathbf y_i^n(\mathbf x)},
\qquad
\sum_{i=1}^{N}p_i(\mathbf x)=1.
\end{equation}
Suppose the dominant branch $i_0$ satisfies:
\begin{equation}\label{eq:branch condition}
p_{i_0}(\mathbf x)\ge b(\mathbf x)=1-p_{i_0}(\mathbf x), \quad \text{or equivalently,} \quad p_{i_0}(\mathbf x)\ge 1/2.
\end{equation}
Then
\begin{equation}\label{eq:pointwise_identity}
m_n(\mathbf x)
=
b(\mathbf x)\overline{\Delta}_n(\mathbf x).
\end{equation}
In particular,
\begin{equation}\label{eq:pointwise_sandwich}
b(\mathbf x)\Delta_n^-(\mathbf x)
\le
m_n(\mathbf x)
\le
b(\mathbf x)\Delta_n^+(\mathbf x).
\end{equation}
\end{proposition}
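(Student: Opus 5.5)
Since $K^n(\mathbf x,\cdot)$ is a finite atomic measure, we have $\ell_n(\mathbf x,\mathbf y)=\sum_{i=1}^N p_i(\mathbf x)\,d_{\mathcal X}(\mathbf y,\mathbf y_i^n(\mathbf x))$. The identity \eqref{eq:pointwise_identity} then says that the atom $\mathbf y_{i_0}^n(\mathbf x)$ is a minimizer of $\ell_n(\mathbf x,\cdot)$. The plan is to establish matching upper and lower bounds on $m_n(\mathbf x)$.

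For the upper bound, evaluate $\ell_n$ at $\mathbf y=\mathbf y_{i_0}^n(\mathbf x)$. The $i_0$ term vanishes, which gives
\[
m_n(\mathbf x)\le \sum_{j\neq i_0}p_j(\mathbf x)\,d_{\mathcal X}\big(\mathbf y_{i_0}^n(\mathbf x),\mathbf y_j^n(\mathbf x)\big)=b(\mathbf x)\overline{\Delta}_n(\mathbf x)
\]
by \eqref{eq:weighted_separation}. The case $b(\mathbf x)=0$ is trivial, since then $K^n(\mathbf x,\cdot)$ is Dirac and both sides are $0$.

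For the lower bound, fix an arbitrary $\mathbf y\in\mathcal X$. For each $j\neq i_0$, the triangle inequality gives
\[
d_{\mathcal X}(\mathbf y,\mathbf y_j^n)\ge d_{\mathcal X}(\mathbf y_{i_0}^n,\mathbf y_j^n)-d_{\mathcal X}(\mathbf y,\mathbf y_{i_0}^n).
\]
Multiplying by $p_j$ and summing over $j\neq i_0$, we get
\[
\ell_n(\mathbf x,\mathbf y)\ge p_{i_0}\,d_{\mathcal X}(\mathbf y,\mathbf y_{i_0}^n)+b\overline{\Delta}_n-b\,d_{\mathcal X}(\mathbf y,\mathbf y_{i_0}^n)=b\overline{\Delta}_n+(p_{i_0}-b)\,d_{\mathcal X}(\mathbf y,\mathbf y_{i_0}^n).
\]
The dominance condition \eqref{eq:branch condition} gives $p_{i_0}-b\ge0$, so $\ell_n(\mathbf x,\mathbf y)\ge b\overline{\Delta}_n$ for every $\mathbf y$. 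Taking the infimum yields $m_n(\mathbf x)\ge b(\mathbf x)\overline{\Delta}_n(\mathbf x)$, and hence equality.

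This lower bound is the only substantive step, and it is exactly where $p_{i_0}\ge1/2$ is used. It is the metric-space analogue of the fact that a point carrying at least half the mass is a median. Without that hypothesis the coefficient of $d_{\mathcal X}(\mathbf y,\mathbf y_{i_0}^n)$ could be negative, and moving $\mathbf y$ away from the dominant atom could genuinely reduce the risk. Note also that the argument uses only the triangle inequality, so no Euclidean structure is needed.

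Finally, the sandwich \eqref{eq:pointwise_sandwich} follows immediately. Each distance in the average \eqref{eq:weighted_separation} lies between $\Delta_n^-(\mathbf x)$ and $\Delta_n^+(\mathbf x)$, and the weights $p_j/b$ for $j\neq i_0$ sum to one. Hence $\Delta_n^-\le\overline{\Delta}_n\le\Delta_n^+$, and multiplying by $b(\mathbf x)\ge0$ gives the claim.
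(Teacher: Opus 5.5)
Your proof is correct and matches the paper's argument: the upper bound comes from evaluating $\ell_n$ at the dominant atom $\mathbf y_{i_0}^n(\mathbf x)$, and the lower bound from combining the triangle inequality $d_{\mathcal X}(\mathbf y,\mathbf y_{i_0}^n)+d_{\mathcal X}(\mathbf y,\mathbf y_j^n)\ge d_{\mathcal X}(\mathbf y_{i_0}^n,\mathbf y_j^n)$ with $p_{i_0}\ge b$. You only apply these two steps in the reverse order, and your sandwich step is the same convex-combination observation $\Delta_n^-\le\overline{\Delta}_n\le\Delta_n^+$ the paper uses.
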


\begin{proof}
Let $\mathbf y_0 := \mathbf y_{i_0}^n(\mathbf x)$. For any \(\mathbf y\in\mathcal X\), write
\begin{equation}
a
:=
d_{\mathcal X}(\mathbf y,\mathbf y_0),
\qquad
b_j
:=
d_{\mathcal X}(\mathbf y,\mathbf y_j^n(\mathbf x)),
\qquad
D_j
:=
d_{\mathcal X}(\mathbf y_0,\mathbf y_j^n(\mathbf x)).
\end{equation}
Then
\begin{equation}
\ell_n(\mathbf x,\mathbf y)
=
p_{i_0}(\mathbf x)a
+
\sum_{j\neq i_0}
p_j(\mathbf x)b_j.
\end{equation}
Since \(p_{i_0}(\mathbf x)\ge b(\mathbf x)=\sum_{j\neq i_0}p_j(\mathbf x)\),
we have
\begin{equation}
p_{i_0}(\mathbf x)a
\ge
\sum_{j\neq i_0}p_j(\mathbf x)a.
\end{equation}
Therefore,
\begin{equation}
\ell_n(\mathbf x,\mathbf y)
\ge
\sum_{j\neq i_0}
p_j(\mathbf x)
\left(
a+b_j
\right).
\end{equation}
By the triangle inequality $a+b_j \ge D_j$, we have
\begin{equation}
\ell_n(\mathbf x,\mathbf y) \ge \sum_{j\neq i_0} p_j(\mathbf x) d_{\mathcal X}\left(\mathbf y_{i_0}^n(\mathbf x),\mathbf y_j^n(\mathbf x)\right) = b(\mathbf x)\overline{\Delta}_n(\mathbf x).
\end{equation}
Taking the infimum over $\mathbf y\in\mathcal X$ gives
\begin{equation}
m_n(\mathbf x)\ge b(\mathbf x)\overline{\Delta}_n(\mathbf x).
\end{equation}
On the other hand, taking pointwise risk at $\mathbf y=\mathbf y_{i_0}^n(\mathbf x)$ yields
\begin{equation}
m_n(\mathbf x) \le \ell_n(\mathbf x,\mathbf y_{i_0}^n(\mathbf x)) = \sum_{j\neq i_0} p_j(\mathbf x) d_{\mathcal X} \left( \mathbf y_{i_0}^n(\mathbf x), \mathbf y_j^n(\mathbf x) \right) = b(\mathbf x)\overline{\Delta}_n(\mathbf x).
\end{equation}
Thus,
\begin{equation}
m_n(\mathbf x) = b(\mathbf x)\overline{\Delta}_n(\mathbf x),
\end{equation}
and 
\begin{equation}
\Delta_n^-(\mathbf x)
\le
\overline{\Delta}_n(\mathbf x)
\le
\Delta_n^+(\mathbf x).
\end{equation}
\end{proof}

If the outlier separations are comparable on the local region $A_n$ that satisfies Eq.~\eqref{eq:branch condition}, namely if there exists a constant \(C_\Delta\ge1\) such that 
\begin{equation}
\Delta_n^+(\mathbf x)
\le
C_\Delta\Delta_n^-(\mathbf x),
\qquad
\mathbf x\in A_n,
\end{equation}
then
\begin{equation}\label{eq:scaling_law}
\mathcal E_{n,A_n}^*
\asymp
\int_{A_n}
b(\mathbf x)\Delta_n^-(\mathbf x)\,
\mathrm d\mu(\mathbf x),
\end{equation}
with constants depending only on \(C_\Delta\). Finally, combining this with Eq.~\eqref{eq:branch_separation_bound}, we obtain the explicit mass-separation mechanism
\begin{equation}
\mathcal E_{n,A_n}^* \ge \int_{A_n}
b(\mathbf x)
\left[
\min_{j\neq i_0}
\left\{
\alpha_{F,n}(\mathbf{x})r_{n,j}(\mathbf x) -
\frac{K_{\mathrm{Hess}}(\mathbf{x})}{2}
r_{n,j}(\mathbf x)^2
\right\}
\right]_+
\,\mathrm d\mu(\mathbf x),
\end{equation}
where $[s]_+:=\max\{s,0\}$. Thus, within this dominant-branch regime, a nontrivial contribution to $\mathcal{E}^*$ requires both non-negligible outlier mass and a separated future branch.

\section{Estimation and numerical experiments}\label{Numerical Verification}

From Section~\ref{Section three}, the essential point for checking deterministic closure of the evolution is estimating the dispersion of the $n$-step pushforward kernel $K^n(\mathbf{x},\cdot)$. In this section, we use the empirical $k$-NN kernel to approximate $K^n(\mathbf{x},\cdot)$ in the reconstructed space, and compute the corresponding Fr\'echet median via a Weiszfeld iteration~\cite{weiszfeld1937point}.

\subsection{Estimator and algorithm}

Fix a pushforward horizon $n \ge 1$, and take $q\in\{0,1,\cdots,T-n-1\}$. For each $\mathbf{x}_q$, let $\mathcal{N}_k(\mathbf{x}_q)$ denote the set of the $k$ nearest neighbors of $\mathbf{x}_q$, after applying a Theiler window, and the $k$-NN radius is defined as: 
\begin{equation}\label{kNN_radius}
\begin{aligned}
    r_{k,N}(\mathbf{x}_q) := \max_{t\in \mathcal{N}_k(q)}d_\mathcal{X}(\mathbf{x}_t,\mathbf{x}_q).
\end{aligned}
\end{equation}
Pushing the neighborhood cloud forward by $n$ steps yields a cloud $\mathcal{Y}_q := \{\mathbf{y}_j\}_{j=1}^k := \{\mathbf{x}_{t+n}: t\in\mathcal{N}_k(\mathbf{x}_q)\}\subset \mathcal{X}$, which approximates the conditional kernel $K^n(\mathbf{x}_q,\cdot)$ at the finite resolution $r_{k,N}(\mathbf{x}_q)$. Throughout this section, we use the uniform empirical kernel:
\begin{equation}\label{eq:knn_uniform_kernel}
\begin{aligned}
    \widehat K_{N,k}^n(\mathbf{x}_q,\cdot)\ :=\ \frac{1}{k}\sum_{j=1}^k \delta_{\mathbf{y}_j}.
\end{aligned}
\end{equation}
Given $\widehat K_{N,k}^n(\mathbf{x}_q,\cdot)$, the empirical pointwise risk is:
\begin{equation}\label{eq:ellhat_nk}
\widehat{\ell}_{n,k,N}(\mathbf{x}_q,\mathbf{y})
\ :=\ 
\int_{\mathcal{X}} d_{\mathcal{X}}(\mathbf{y},\mathbf{y}')\,\widehat K_{N,k}^n(\mathbf{x}_q,\mathrm{d}\mathbf{y}')
\ =\ 
\frac{1}{k}\sum_{j=1}^k d_{\mathcal{X}}(\mathbf{y},\mathbf{y}_j),
\end{equation}
and the empirical Fr\'echet median cost is
\begin{equation}\label{eq:mhat_nk}
\widehat m_{n,k,N}(\mathbf{x}_q)\ :=\ \min_{\mathbf{y}\in \mathcal{X}}\widehat{\ell}_{n,k,N}(\mathbf{x}_q,\mathbf{y})
\ =\ 
\min_{\mathbf{y}\in \mathcal{X}}\frac{1}{k}\sum_{j=1}^k d_{\mathcal{X}}(\mathbf{y},\mathbf{y}_j).
\end{equation}
Moreover, for a fixed query set $\mathcal{Q}$, the numerical approximation of $\mathcal{E}_{n}^{*}$ under finite sample and finite resolution is:
\begin{equation}\label{eq:Estar_nk}
\widehat{\mathcal{E}}_{n,k,N}^{*}
\ :=\ 
\frac{1}{|\mathcal{Q}|}\sum_{q\in\mathcal{Q}}\widehat m_{n,k,N}(\mathbf{x}_q).
\end{equation}
The following Lemma~\ref{lem:best_dirac_lipschitz} is conditional: if an empirical kernel $\widehat K^n$ converges to $K^n$ in $\mathcal{W}_1$, then the corresponding best-Dirac costs converge and $\widehat m_{n,k,N}(\mathbf x)$ converges to $m_n(\mathbf x)$.   

\begin{lemma}\label{lem:best_dirac_lipschitz}
For $\eta \in \mathcal{P}(F(\mathcal{M}))$, define
\begin{equation}
\begin{aligned}
\mathcal{J}(\eta)
:= \inf_{\mathbf{y}\in \mathcal{X}}\int_{\mathcal{X}} d_{\mathcal{X}}(\mathbf{y},\xi)\,\eta(d\xi) = \inf_{\mathbf{y}\in \mathcal{X}} \mathcal{W}_1(\eta,\delta_\mathbf{y}).
\end{aligned}
\end{equation}
Then for any $\eta,\zeta\in \mathcal{P}(F(\mathcal{M}))$,
\begin{equation}
\begin{aligned}
\big|\mathcal{J}(\eta)-\mathcal{J}(\zeta)\big| \le \mathcal{W}_1(\eta,\zeta).
\end{aligned}
\end{equation}
If $\mathcal{W}_1\left(\widehat K^n_{N,k}(\mathbf x,\cdot),
K^n(\mathbf x,\cdot)\right)\to0$, then $\widehat m_{n,k,N}(\mathbf x)\to m_n(\mathbf x)$, $\text{for $\mu$-a.e. }\mathbf{x}$.   
\end{lemma}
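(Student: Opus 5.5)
The plan is to use the variational form $\mathcal J(\eta)=\inf_{\mathbf y}\mathcal W_1(\eta,\delta_{\mathbf y})$ and the triangle inequality for $\mathcal W_1$, which is a genuine metric on $\mathcal P_1(\mathcal X)$. Since $F(\mathcal M)$ is compact as the continuous image of the compact manifold $\mathcal M$, every $\eta\in\mathcal P(F(\mathcal M))$ has finite first moment. Hence $\mathcal W_1(\eta,\delta_{\mathbf y})<\infty$ for every $\mathbf y\in\mathcal X$, and $\mathcal J(\eta)$ is a finite, nonnegative number. This removes any $\infty-\infty$ issues.

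First fix $\mathbf y\in\mathcal X$. By the triangle inequality,
\begin{equation*}
\mathcal J(\eta)\le \mathcal W_1(\eta,\delta_{\mathbf y})\le \mathcal W_1(\eta,\zeta)+\mathcal W_1(\zeta,\delta_{\mathbf y}).
\end{equation*}
If one prefers to avoid citing the metric property, the same bound follows directly. Take any coupling $\gamma$ of $(\eta,\zeta)$ and integrate $d_{\mathcal X}(\mathbf y,\xi)\le d_{\mathcal X}(\xi,\xi')+d_{\mathcal X}(\mathbf y,\xi')$ against $\gamma$. This gives $\int d_{\mathcal X}(\mathbf y,\xi)\,\eta(d\xi)\le \int d_{\mathcal X}\,d\gamma+\int d_{\mathcal X}(\mathbf y,\xi')\,\zeta(d\xi')$, and one then takes the infimum over $\gamma$.

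Next, take the infimum over $\mathbf y$ on the right-hand side to get $\mathcal J(\eta)\le \mathcal W_1(\eta,\zeta)+\mathcal J(\zeta)$. Exchanging the roles of $\eta$ and $\zeta$ gives the reverse inequality, so $|\mathcal J(\eta)-\mathcal J(\zeta)|\le\mathcal W_1(\eta,\zeta)$. No attainment of the infimum is needed anywhere, so the existence of a Fréchet median plays no role.

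For the convergence statement, note that by definition $\widehat m_{n,k,N}(\mathbf x)=\mathcal J(\widehat K^n_{N,k}(\mathbf x,\cdot))$ and $m_n(\mathbf x)=\mathcal J(K^n(\mathbf x,\cdot))$. The minimum in the empirical case coincides with the infimum. Two measures need to lie in $\mathcal P(F(\mathcal M))$ for the bound to apply:
\begin{itemize}
\item $\widehat K^n_{N,k}(\mathbf x,\cdot)$ is supported on sample points $\mathbf x_{t+n}\in F(\mathcal M)$;
\item $K^n(\mathbf x,\cdot)=(F\circ T^n)\#\nu_{\mathbf x}$ is supported in $F(\mathcal M)$ for $\mu$-a.e. $\mathbf x$, where the kernel is defined.
\end{itemize}
Applying the Lipschitz bound at each such $\mathbf x$ gives $|\widehat m_{n,k,N}(\mathbf x)-m_n(\mathbf x)|\le \mathcal W_1(\widehat K^n_{N,k}(\mathbf x,\cdot),K^n(\mathbf x,\cdot))\to0$.

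The only step needing care is this bookkeeping: the finiteness of $\mathcal J$ and the support conditions, which hold only $\mu$-a.e. because the disintegration is defined only up to null sets. The core inequality itself is routine.
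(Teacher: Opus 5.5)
Your proof is correct and follows essentially the same route as the paper: you apply the $\mathcal W_1$ triangle inequality at a fixed $\mathbf y$, take the infimum over $\mathbf y$, symmetrize, and then apply the bound to $\widehat K^n_{N,k}(\mathbf x,\cdot)$ and $K^n(\mathbf x,\cdot)$. The paper's final step additionally notes that on the compact set $F(\mathcal M)$ weak convergence already implies $\mathcal W_1$-convergence, which the stated hypothesis does not require; your finiteness and support bookkeeping, and the direct coupling argument, are correct and harmless additions.
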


\begin{proof}
For any fixed $\mathbf{y}\in X$, the triangle inequality for $\mathcal{W}_1$ gives
\begin{equation}
\begin{aligned}
\mathcal{W}_1(\eta,\delta_\mathbf{y}) \le \mathcal{W}_1(\eta,\zeta)+\mathcal{W}_1(\zeta,\delta_\mathbf{y}).
\end{aligned}
\end{equation}
Taking the infimum over $\mathbf{y}$ yields
\begin{equation}
\begin{aligned}
\mathcal{J}(\eta) \le \mathcal{W}_1(\eta,\zeta)+\mathcal{J}(\zeta).
\end{aligned}
\end{equation}
Exchanging the roles of $\eta$ and $\zeta$ gives
\begin{equation}
\begin{aligned}
\mathcal{J}(\zeta) \le \mathcal{W}_1(\eta,\zeta)+\mathcal{J}(\eta).
\end{aligned}
\end{equation}
Combining the two inequalities proves
\begin{equation}
\begin{aligned}
\big|\mathcal{J}(\eta)-\mathcal{J}(\zeta)\big| \le \mathcal{W}_1(\eta,\zeta).
\end{aligned}
\end{equation}
Since $\mathcal{M}$ is compact and $F$ is continuous, $F(\mathcal{M})$ is compact, thus weak convergence of probability measures implies $\mathcal{W}_1$-convergence. Thus, the convergence of $\widehat K_{N,k}^n(\mathbf{x},\cdot)$ implies $\widehat m_{n,k,N}(\mathbf{x})\to m_n(\mathbf{x})$.
\end{proof}

The proposed algorithm is outlined in Algorithm~\ref{alg:e_star_fixed_k}. 

\begin{algorithm}[!ht]
\caption{Estimating intrinsic stochasticity at fixed neighborhood scale $k$}
\label{alg:e_star_fixed_k}
\begin{algorithmic}[1]
\Require Reconstructed states $\{\mathbf{x}_t\}_{t=0}^{T-1}\subset \mathcal{X}$; horizon $n$; neighborhood count $k$; Theiler window $w$; Number of queries $N_\mathcal{Q}$.
\Ensure $\widehat{\mathcal{E}}_{n,k,N}^*$.
\State Sample query indices $\mathcal{Q}\subset\{0,\dots,T-n-1\}$ uniformly at random with $|\mathcal{Q}|=N_{\mathcal{Q}}$.
\For{$q\in\mathcal{Q}$}
    \State Find $\mathcal{N}_k(\mathbf{x}_q)$, the set of $k$ nearest neighbors $\mathcal{N}_{k}(\mathbf{x}_q)$ of $\mathbf{x}_q$ with the Theiler window. 
    \State Form $\mathcal{Y}_q:=\{\mathbf{x}_{t+n}:t\in\mathcal{N}_k(\mathbf{x}_q)\}$.
    \State Compute $\widehat m_{n,k,N}(\mathbf{x}_q)=\min_{\mathbf{y}\in X}\frac{1}{k}\sum_{\mathbf{y}_j\in\mathcal{Y}_q} d_{\mathcal{X}}(\mathbf{y},\mathbf{y}_j)$ via Weiszfeld iteration.
\EndFor
\State \Return $\widehat{\mathcal{E}}_{n,k,N}^{*}=\frac{1}{|\mathcal{Q}|}\sum_{q\in\mathcal{Q}}\widehat m_{n,k,N}(\mathbf{x}_q)$.
\end{algorithmic}
\end{algorithm}

\subsection{Finite resolution bounds} 
The $k$-NN-based estimator $\widehat{\mathcal{E}}_{n,k,N}^*$ is computed at a finite spatial resolution $r_{k,N}(\mathbf{x}_q)$. The following bounds separate the finite-resolution single-branch contribution from the possible multi-branch contribution. 

\begin{proposition}[Upper bound on a single isolated branch]
\label{prop:single_branch_upper}
Fix a horizon $n\ge 1$ and a query index $q$. Assume there exists an open set
$U\subset \mathcal{M}$ and an open set $V\subset \mathcal{X}$ such that
\begin{enumerate}
    \item $\mathbf{x}_q\in V\cap F(\mathcal{M})$;
    \item $F^{-1}(V\cap F(\mathcal{M}))=U$;
    \item $F|_U:U\to V\cap F(\mathcal{M})$ is a $C^1$ diffeomorphism.
\end{enumerate}
Let
\begin{equation}
\begin{aligned}
G := (F|_U)^{-1}:V\cap F(\mathcal{M})\to U, \quad Y_n := F\circ T^n\circ G:V\cap F(\mathcal{M})\to \mathcal{X}.
\end{aligned}
\end{equation}
Assume that $Y_n$ is $L_n(q)$-Lipschitz on $V\cap F(\mathcal{M})$. If
\begin{equation}
\begin{aligned}
B_X\!\big(\mathbf{x}_q,r_{k,N}(\mathbf{x}_q)\big)\cap F(\mathcal{M})\subset V,
\end{aligned}
\end{equation}
then
\begin{equation}
\begin{aligned}
\widehat m_{n,k,N}(\mathbf{x}_q) \le L_n(q)\,r_{k,N}(\mathbf{x}_q).
\end{aligned}
\end{equation}
In particular, for fixed $n$, if $r_{k,N}(\mathbf{x}_q)\to 0$, then
\begin{equation}
\begin{aligned}
\widehat m_{n,k,N}(\mathbf{x}_q) \to 0.
\end{aligned}
\end{equation}
\end{proposition}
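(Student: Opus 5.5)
The plan is to bound the empirical Fréchet median cost by the empirical risk at one explicit candidate point, namely the deterministic image $Y_n(\mathbf{x}_q)$ of the query. Since $\widehat m_{n,k,N}(\mathbf{x}_q)$ is a minimum over $\mathbf y\in\mathcal X$, we have
\[
\widehat m_{n,k,N}(\mathbf{x}_q)\le \widehat\ell_{n,k,N}\big(\mathbf{x}_q,Y_n(\mathbf{x}_q)\big)=\frac1k\sum_{t\in\mathcal N_k(\mathbf{x}_q)} d_{\mathcal X}\big(Y_n(\mathbf{x}_q),\mathbf{x}_{t+n}\big).
\]
It then suffices to show that each summand is at most $L_n(q)\,r_{k,N}(\mathbf{x}_q)$.

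The key step is to identify every pushed-forward neighbour $\mathbf{x}_{t+n}$ with $Y_n(\mathbf{x}_t)$. The samples come from a trajectory, so $\mathbf{x}_t=F(\mathbf z_t)$ and $\mathbf{x}_{t+n}=F(T^n\mathbf z_t)$ for latent states $\mathbf z_t\in\mathcal M$. (For the delay map this uses $D_{h,\tau,m}\circ T^n$ evaluated at $\mathbf z_t$, which is consistent with the definition.)

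For $t\in\mathcal N_k(\mathbf{x}_q)$, the definition of the $k$-NN radius gives $d_{\mathcal X}(\mathbf{x}_t,\mathbf{x}_q)\le r_{k,N}(\mathbf{x}_q)$. Together with $\mathbf{x}_t\in F(\mathcal M)$ and the ball inclusion hypothesis, this puts $\mathbf{x}_t\in V\cap F(\mathcal M)$. Assumption 2, $F^{-1}(V\cap F(\mathcal M))=U$, then forces $\mathbf z_t\in U$. Since $F|_U$ is injective by assumption 3, $\mathbf z_t=G(\mathbf{x}_t)$, and therefore $\mathbf{x}_{t+n}=F(T^nG(\mathbf{x}_t))=Y_n(\mathbf{x}_t)$.

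This is the step where the hypotheses do real work: the preimage condition rules out a second distant branch sharing the same image. Without it, a neighbour could lie on another fibre component and carry an unrelated future. The same reasoning applied to $\mathbf{x}_q$ itself (assumption 1) gives $\mathbf z_q=G(\mathbf x_q)$, so $Y_n(\mathbf{x}_q)=\mathbf{x}_{q+n}$ is a genuine point.

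With the identification in hand, the Lipschitz property of $Y_n$ on $V\cap F(\mathcal M)$ gives
\[
d_{\mathcal X}\big(Y_n(\mathbf{x}_q),Y_n(\mathbf{x}_t)\big)\le L_n(q)\,d_{\mathcal X}(\mathbf{x}_q,\mathbf{x}_t)\le L_n(q)\,r_{k,N}(\mathbf{x}_q).
\]
Averaging over the $k$ neighbours yields the claimed bound $\widehat m_{n,k,N}(\mathbf{x}_q)\le L_n(q)\,r_{k,N}(\mathbf{x}_q)$.

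For the limit statement, fix $n$ and the data $U,V,L_n(q)$. Once $r_{k,N}(\mathbf{x}_q)$ is small enough that the ball lies inside the open set $V$, the hypothesis holds and the bound applies. Combined with $\widehat m_{n,k,N}\ge0$, this gives $\widehat m_{n,k,N}(\mathbf{x}_q)\to0$ as $r_{k,N}(\mathbf{x}_q)\to0$.

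The only delicate point I anticipate is this bookkeeping of latent states, that is, confirming that the data indices correspond to an actual orbit $\mathbf z_{t+n}=T^n\mathbf z_t$. I would state this explicitly as the standing sampling model. Everything else is a direct consequence of choosing $Y_n(\mathbf{x}_q)$ as the candidate point.
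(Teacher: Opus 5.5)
Your proposal is correct and follows the paper's proof. Both evaluate the empirical risk at the test point $Y_n(\mathbf{x}_q)$, then use the ball inclusion, $F^{-1}(V\cap F(\mathcal{M}))=U$ and injectivity of $F|_U$ to identify each pushed-forward neighbour as $\mathbf{x}_{t+n}=Y_n(\mathbf{x}_t)$, and finish with the Lipschitz bound averaged over the $k$ neighbours; your explicit sampling-model assumption and your remark that the ball eventually lies in the open set $V$ are slightly more careful than the paper's ``follows immediately''.
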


\begin{proof}
Write
\begin{equation}
\begin{aligned}
\mathcal{N}_k(q) = \{t_1,\dots,t_k\}, \qquad \mathbf{y}_j := \mathbf{x}_{t_j+n}, \qquad j=1,\dots,k.
\end{aligned}
\end{equation}
By the assumption
\begin{equation}
\begin{aligned}
B_{\mathcal{X}}\!\big(\mathbf{x}_q,r_{k,N}(\mathbf{x}_q)\big)\cap F(\mathcal{M})\subset V,
\end{aligned}
\end{equation}
every neighbor $\mathbf{x}_{t_j}$ lies in $V\cap F(\mathcal{M})$. Since
$F^{-1}(V\cap F(\mathcal{M}))=U$ and $G=(F|_U)^{-1}$, the latent state generating $\mathbf{x}_{t_j}$
must be
\begin{equation}
\begin{aligned}
\mathbf{z}_{t_j} = G(\mathbf{x}_{t_j})\in U,
\end{aligned}
\end{equation}
and hence
\begin{equation}
\begin{aligned}
\mathbf{y}_j &= \mathbf{x}_{t_j+n} = F(T^n \mathbf{z}_{t_j}) = F(T^n G(\mathbf{x}_{t_j})) = Y_n(\mathbf{x}_{t_j}).
\end{aligned}
\end{equation}
Now choose the test point
\begin{equation}
\begin{aligned}
y_* := Y_n(\mathbf{x}_q) = F(T^n G(\mathbf{x}_q)).
\end{aligned}
\end{equation}
By definition of $\widehat m_{n,k,N}(\mathbf{x}_q)$,
\begin{equation}
\begin{aligned}
\widehat m_{n,k,N}(\mathbf{x}_q) = \inf_{\mathbf{y}\in \mathcal{X}}\frac{1}{k}\sum_{j=1}^k d_{\mathcal{X}}(\mathbf{y},\mathbf{y}_j) \le \frac{1}{k}\sum_{j=1}^k d_{\mathcal{X}}(\mathbf{y}_*,\mathbf{y}_j).
\end{aligned}
\end{equation}
Using the Lipschitz property of $Y_n$,
\begin{equation}
\begin{aligned}
d_{\mathcal{X}}(\mathbf{y}_*,\mathbf{y}_j) = d_{\mathcal{X}}\!\big(Y_n(\mathbf{x}_q),Y_n(\mathbf{x}_{t_j})\big) \le L_n(q)\,d_{\mathcal{X}}(\mathbf{x}_q,\mathbf{x}_{t_j}) \le L_n(q)\,r_{k,N}(\mathbf{x}_q).
\end{aligned}
\end{equation}
Averaging over $j=1,\dots,k$ yields
\begin{equation}\label{eq: upper bound}
\begin{aligned}
\widehat m_{n,k,N}(\mathbf{x}_q) \le L_n(q)\,r_{k,N}(\mathbf{x}_q).
\end{aligned}
\end{equation}
The final claim follows immediately when $r_{k,N}(\mathbf{x}_q)\to 0$.
\end{proof}

Proposition~\ref{prop:single_branch_upper} shows that a large $\widehat m_{n,k,N}$ is not, by itself, evidence of non-deterministic closure. Even on a single isolated branch, the empirical cost can be large if the neighborhood radius $r_{k,N}$ or the Lipschitz constant is large. 

\begin{proposition}[Lower bound from two separated branches]
\label{prop:two_branch_lower}
Fix a horizon $n\ge 1$ and a query index $q$. Assume there exist two disjoint open sets $U_1,U_2\subset \mathcal{M}$ and an open set $V\subset \mathcal{X}$ such that $\mathbf{x}_q\in V\cap F(\mathcal{M})$ and,
for $i=1,2$,
\begin{equation}
\begin{aligned}
F|_{U_i}:U_i\to V\cap F(\mathcal{M})
\end{aligned}
\end{equation}
is a $C^1$ diffeomorphism. Let
\begin{equation}
\begin{aligned}
G_i &:= (F|_{U_i})^{-1}:V\cap F(\mathcal{M})\to U_i, \\
Y_{i,n} &:= F\circ T^n\circ G_i:V\cap F(\mathcal{M})\to \mathcal{X}.
\end{aligned}
\end{equation}
Assume that $Y_{i,n}$ is $L_{i,n}(q)$-Lipschitz on $V\cap F(\mathcal{M})$ for $i=1,2$, and define
\begin{equation}
\begin{aligned}
\Delta_n(\mathbf{x}_q) := d_{\mathcal{X}}\!\big(Y_{1,n}(\mathbf{x}_q),Y_{2,n}(\mathbf{x}_q)\big) 
= d_{\mathcal{X}}\!\big(\mathbf{y}_1^n(\mathbf{x}_q),\mathbf{y}_2^n(\mathbf{x}_q)\big) > 0.
\end{aligned}
\end{equation}
Write
\begin{equation}
\begin{aligned}
\mathcal{N}_k(q) &= \{t_1,\dots,t_k\}, \\
\mathcal{I}_i(q) &:= \{\ell\in\{1,\dots,k\}: \mathbf{z}_{t_\ell}\in U_i\}, \\
\widehat p_i(q) &:= \frac{|\mathcal{I}_i(q)|}{k}, \quad i=1,2.
\end{aligned}
\end{equation}
If $B_X\!\big(\mathbf{x}_q,r_{k,N}(\mathbf{x}_q)\big)\cap F(\mathcal{M})\subset V$, then
\begin{equation}
\begin{aligned}
\widehat m_{n,k,N}(\mathbf{x}_q) \ge \min\{\widehat p_1(q),\widehat p_2(q)\}\,\Delta_n(\mathbf{x}_q) - \Big(\widehat p_1(q)L_{1,n}(q)+\widehat p_2(q)L_{2,n}(q)\Big)\,r_{k,N}(\mathbf{x}_q).
\end{aligned}
\end{equation}
\end{proposition}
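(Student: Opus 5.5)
The plan is to compare the empirical cloud $\mathcal{Y}_q$ with the two exact branch futures $\mathbf{y}_1^n(\mathbf{x}_q)=Y_{1,n}(\mathbf{x}_q)$ and $\mathbf{y}_2^n(\mathbf{x}_q)=Y_{2,n}(\mathbf{x}_q)$. The Lipschitz assumption controls the cost of moving each neighbor's future to its branch representative. A two-point triangle-inequality argument, as in the proof of Proposition~\ref{prop:dominant_mass_separation}, then gives the lower bound for the cost of any single predictor $\mathbf{y}$.

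\emph{Step 1: identify the neighbors' futures.} By the ball assumption, every neighbor $\mathbf{x}_{t_\ell}$ lies in $V\cap F(\mathcal{M})$, so $d_{\mathcal{X}}(\mathbf{x}_{t_\ell},\mathbf{x}_q)\le r_{k,N}(\mathbf{x}_q)$. For $\ell\in\mathcal{I}_i(q)$ we have $\mathbf{z}_{t_\ell}\in U_i$ and $F(\mathbf{z}_{t_\ell})=\mathbf{x}_{t_\ell}$. Since $F|_{U_i}$ is injective, $G_i(\mathbf{x}_{t_\ell})=\mathbf{z}_{t_\ell}$, and therefore
\[
\mathbf{y}_\ell=\mathbf{x}_{t_\ell+n}=F(T^n\mathbf{z}_{t_\ell})=Y_{i,n}(\mathbf{x}_{t_\ell}).
\]
This is the same mechanism as in the proof of Proposition~\ref{prop:single_branch_upper}, applied to each branch separately. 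The Lipschitz property then yields
\[
d_{\mathcal{X}}\big(\mathbf{y}_\ell,\mathbf{y}_i^n(\mathbf{x}_q)\big)\le L_{i,n}(q)\,r_{k,N}(\mathbf{x}_q).
\]

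\emph{Step 2: lower bound for an arbitrary predictor.} Fix any $\mathbf{y}\in\mathcal{X}$. Indices outside $\mathcal{I}_1(q)\cup\mathcal{I}_2(q)$ contribute nonnegative terms, so we drop them; this is why the statement uses only $\widehat p_1,\widehat p_2$ and needs no condition $\widehat p_1+\widehat p_2=1$. For $\ell\in\mathcal{I}_i(q)$, the reverse triangle inequality and Step 1 give
\[
d_{\mathcal{X}}(\mathbf{y},\mathbf{y}_\ell)\ge d_{\mathcal{X}}\big(\mathbf{y},\mathbf{y}_i^n(\mathbf{x}_q)\big)-L_{i,n}(q)\,r_{k,N}(\mathbf{x}_q).
\]
Averaging over $\ell$ gives
\[
\widehat\ell_{n,k,N}(\mathbf{x}_q,\mathbf{y})\ge \sum_{i=1}^2\widehat p_i(q)\,a_i-\Big(\widehat p_1(q)L_{1,n}(q)+\widehat p_2(q)L_{2,n}(q)\Big)r_{k,N}(\mathbf{x}_q),
\]
where $a_i:=d_{\mathcal{X}}(\mathbf{y},\mathbf{y}_i^n(\mathbf{x}_q))$.

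\emph{Step 3: two-point bound and infimum.} We bound the weighted sum from below by
\[
\widehat p_1a_1+\widehat p_2a_2\ge \min\{\widehat p_1,\widehat p_2\}(a_1+a_2)\ge \min\{\widehat p_1,\widehat p_2\}\,\Delta_n(\mathbf{x}_q),
\]
using the triangle inequality $a_1+a_2\ge\Delta_n(\mathbf{x}_q)$. The right-hand side of the resulting bound does not depend on $\mathbf{y}$. Taking the infimum over $\mathbf{y}\in\mathcal{X}$ in Eq.~\eqref{eq:mhat_nk} therefore gives the claimed inequality.

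The main obstacle is Step 1, namely correctly assigning each neighbor to a branch map. This requires using injectivity of $F$ on each $U_i$, so that $G_i$ recovers the true latent state. It also requires that neighbors whose latent states lie outside $U_1\cup U_2$ are simply discarded rather than needing control. Once the neighbors are assigned to branches, the remaining steps are routine applications of the triangle inequality.
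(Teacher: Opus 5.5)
Your proposal is correct and follows the paper's proof essentially step for step. You identify each neighbor future with $\ell\in\mathcal{I}_i(q)$ as $Y_{i,n}(\mathbf{x}_{t_\ell})$, apply the Lipschitz bound and the reverse triangle inequality, drop the nonnegative contributions of the other neighbors, and finish with $\alpha a+\beta b\ge\min\{\alpha,\beta\}(a+b)$ plus the triangle inequality before taking the infimum over $\mathbf{y}$ (as a minor aside, for $\ell\in\mathcal{I}_i(q)$ the membership $\mathbf{x}_{t_\ell}=F(\mathbf{z}_{t_\ell})\in V\cap F(\mathcal{M})$ already follows from $F(U_i)=V\cap F(\mathcal{M})$, so the ball hypothesis is not actually needed in Step 1).
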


\begin{proof}
For each $\ell\in \mathcal{I}_i(q)$, the neighbor $\mathbf{x}_{t_\ell}$ is generated by a latent state
$\mathbf{z}_{t_\ell}\in U_i$, hence
\begin{equation}
\begin{aligned}
\mathbf{x}_{t_\ell} = F(\mathbf{z}_{t_\ell}),\quad \mathbf{y}_\ell = \mathbf{x}_{t_\ell+n} = F(T^n \mathbf{z}_{t_\ell}) = Y_{i,n}(\mathbf{x}_{t_\ell}).
\end{aligned}
\end{equation}
Therefore,
\begin{equation}
\begin{aligned}
d_{\mathcal{X}}\!\big(\mathbf{y}_\ell,Y_{i,n}(\mathbf{x}_q)\big) = d_{\mathcal{X}}\!\big(Y_{i,n}(\mathbf{x}_{t_\ell}),Y_{i,n}(\mathbf{x}_q)\big) \le L_{i,n}(q)\,d_{\mathcal{X}}(\mathbf{x}_{t_\ell},\mathbf{x}_q) \le L_{i,n}(q)\,r_{k,N}(\mathbf{x}_q).
\end{aligned}
\end{equation}
Now fix any $\mathbf{y}\in \mathcal{X}$. By the reverse triangle inequality, for each
$\ell\in\mathcal{I}_i(q)$,
\begin{equation}
\begin{aligned}
d_{\mathcal{X}}(\mathbf{y},\mathbf{y}_\ell) \ge d_{\mathcal{X}}\!\big(\mathbf{y},Y_{i,n}(\mathbf{x}_q)\big) - d_{\mathcal{X}}\!\big(\mathbf{y}_\ell,Y_{i,n}(\mathbf{x}_q)\big) 
\ge d_{\mathcal{X}}\!\big(\mathbf{y},Y_{i,n}(\mathbf{x}_q)\big) - L_{i,n}(q)\,r_{k,N}(\mathbf{x}_q).
\end{aligned}
\end{equation}
Hence
\begin{equation}
\begin{aligned}
\frac{1}{k}\sum_{\ell=1}^k d_{\mathcal{X}}(\mathbf{y},\mathbf{y}_\ell) \ge \sum_{i=1}^2 \frac{1}{k}\sum_{\ell\in \mathcal{I}_i(q)} \Big(d_{\mathcal{X}}\!\big(\mathbf{y},Y_{i,n}(\mathbf{x}_q)\big)-L_{i,n}(q)\,r_{k,N}(\mathbf{x}_q) \Big).
\end{aligned}
\end{equation}
Since $|\mathcal{I}_i(q)|/k=\widehat p_i(q)$, we have
\begin{equation}
\begin{aligned}
\frac{1}{k}\sum_{\ell=1}^k d_{\mathcal{X}}(\mathbf{y},\mathbf{y}_\ell) &\ge \widehat p_1(q)\,d_{\mathcal{X}}\!\big(\mathbf{y},Y_{1,n}(\mathbf{x}_q)\big) + \widehat p_2(q)\,d_{\mathcal{X}}\!\big(\mathbf{y},Y_{2,n}(\mathbf{x}_q)\big) \\
&\quad - \Big(\widehat p_1(q)L_{1,n}(q)+\widehat p_2(q)L_{2,n}(q)\Big)\,r_{k,N}(\mathbf{x}_q).
\end{aligned}
\end{equation}
Using
\begin{equation}
\begin{aligned}
\alpha a+\beta b \ge \min\{\alpha,\beta\}(a+b), \qquad \alpha,\beta,a,b\ge 0,
\end{aligned}
\end{equation}
and the triangle inequality
\begin{equation}
\begin{aligned}
d_{\mathcal{X}}\!\big(\mathbf{y},Y_{1,n}(\mathbf{x}_q)\big)+d_{\mathcal{X}}\!\big(\mathbf{y},Y_{2,n}(\mathbf{x}_q)\big) &\ge d_{\mathcal{X}}\!\big(Y_{1,n}(\mathbf{x}_q),Y_{2,n}(\mathbf{x}_q)\big) = \Delta_n(\mathbf{x}_q),
\end{aligned}
\end{equation}
we obtain
\begin{equation}
\begin{aligned}
\frac{1}{k}\sum_{\ell=1}^k d_{\mathcal{X}}(\mathbf{y},\mathbf{y}_\ell) \ge \min\{\widehat p_1(q),\widehat p_2(q)\}\,\Delta_n(\mathbf{x}_q) - \Big(\widehat p_1(q)L_{1,n}(q)+\widehat p_2(q)L_{2,n}(q)\Big)\,r_{k,N}(\mathbf{x}_q).
\end{aligned}
\end{equation}
Taking the infimum over $\mathbf{y}\in \mathcal{X}$ proves the claim.
\end{proof}

If additional branches are present, the same estimate applies to the two selected branches after excluding the nonnegative contribution of the other neighbors. 

\subsection{Synthetic examples}\label{Rossler Case}

We apply $\widehat{\mathcal E}_{n,k,N}^{*}$ as a finite-horizon, finite-resolution closure score on the R\"ossler system under several differential maps of the form in Eq.~\eqref{derivative coordinate map}. Our goal is to examine how the empirical closure score changes with the choice of reconstruction. Although the main motivation of this paper is the time-delay reconstruction, our theoretical framework in Section~\ref{Section three} is formulated for any fixed reconstruction map $F$. The differential maps used here are treated as controlled analytic reconstructions generated by the same observables. A large value of $\widehat{\mathcal E}_{n,k,N}^{*}$ may arise from multi-valued closure, from the radius of the $k$-NN cloud, or from expansion under the $n$-step evolution. 

We consider coordinate projections, linear combinations of coordinate projections, and differential embeddings involving multiple coordinate projections. Analytic expressions for all differential embeddings are provided in Appendix~\ref{ape: Rossler}. For all experiments, the pushforward step $n$ is fixed at $20$, with a sample size of $N=5000$ and $k=50$. At the population level, an injective reconstruction map gives a Dirac future kernel with a deterministic closure in the reconstructed space. At finite resolution, however, $\widehat{\mathcal E}_{n,k,N}^*$ may still be non-negligible because the $k$-NN cloud has a positive radius which can be amplified by the $n$-step dynamics. 

A comparison between $\widehat{\mathcal E}_{n,k,N}^*$ and three simpler finite-resolution baselines computed from the same pushforward neighbor clouds is shown in Table~\ref{tab:future_cloud_baseline_comparison}. The implementation details and theoretical relationships are given in Appendix~\ref{app:future_cloud_baselines}. The rankings are consistent across all diagnostics, which is expected, since $\widehat{\mathcal E}_{n,k,N}^*$ is itself a first-moment best-Dirac dispersion of the empirical future cloud. At the same time, conditional variance gives a similar qualitative ordering. However, the scale is dominated by squared distances and is therefore much more sensitive to the upper tails of the future cloud. Comparison shows that $\widehat{\mathcal E}_{n,k,N}^*$ is closely related to future-cloud dispersion, as expected from its definition as a first-moment best-Dirac cost. Rather, its advantage lies in the population-level interpretation that is closely linked to the zero-Dirac criterion in Proposition~\ref{prop:intrinsic_stochasticity} and the mass-separation mechanism in Proposition~\ref{prop:dominant_mass_separation}.   

\begin{table}[!ht]
\centering
\caption{Comparison between the finite-resolution estimator
$\widehat{\mathcal E}_{n,k,N}^{*}$ and three simple future-cloud baseline diagnostics for selected R\"ossler differential reconstructions. All quantities are computed from the same Theiler-windowed $k$-nearest-neighbor ($k$-NN) future clouds after standardization. Results are reported as mean $\pm$ standard deviation over $30$ experiments, with $n=20$, $k=50$, $N_{\mathcal Q}=5000$, and Theiler window $w = 30$. $\hat{V}_{n,k}$ is the empirical conditional variance trace, $\hat{P}_{n,k}$ is the local short-horizon prediction error, and $\hat{D}^{\mathrm{pair}}_{n,k}$ is the mean pairwise future-cloud dispersion. Reconstructions marked with ${}^{\ast}$ are diffeomorphisms. Lower values indicate a smaller finite-resolution future-cloud spread.}
\label{tab:future_cloud_baseline_comparison}
\setlength{\tabcolsep}{4.8pt}
\renewcommand{\arraystretch}{1.15}
\small
\begin{tabular}{lcccc}
\hline
\textbf{Reconstruction}
&
\(\widehat{\mathcal E}_{n,k,N}^*\)
&
$\hat{V}_{n,k}$
&
$\hat{P}_{n,k}$
&
$\hat{D}^{\mathrm{pair}}_{n,k}$
\\
\hline

\(\left(z_2,\dot z_2,\ddot z_2\right)^{\ast}\)
& \(0.05 \pm 0.00\)
& \(0.01 \pm 0.00\)
& \(0.04 \pm 0.00\)
& \(0.06 \pm 0.00\)
\\

\(\left(z_1,z_2,\dot z_1\right)^{\ast}\)
& \(0.05 \pm 0.00\)
& \(0.01 \pm 0.00\)
& \(0.04 \pm 0.00\)
& \(0.06 \pm 0.00\)
\\

\(\left(z_1+z_2,\dot z_1+\dot z_2,\ddot z_1+\ddot z_2\right)\)
& \(0.09 \pm 0.00\)
& \(0.12 \pm 0.01\)
& \(0.08 \pm 0.00\)
& \(0.12 \pm 0.00\)
\\

\(\left(z_1,\dot z_1,\ddot z_1\right)\)
& \(0.11 \pm 0.00\)
& \(0.21 \pm 0.02\)
& \(0.09 \pm 0.00\)
& \(0.15 \pm 0.01\)
\\

\(\left(z_2+z_3,\dot z_2+\dot z_3,\ddot z_2+\ddot z_3\right)\)
& \(0.72 \pm 0.04\)
& \(19.58 \pm 1.53\)
& \(0.57 \pm 0.03\)
& \(1.01 \pm 0.05\)
\\

\(\left(z_1+z_3,\dot z_1+\dot z_3,\ddot z_1+\ddot z_3\right)\)
& \(1.86 \pm 0.10\)
& \(134.67 \pm 10.31\)
& \(1.91 \pm 0.13\)
& \(2.56 \pm 0.13\)
\\

\(\left(z_1,z_3,\dot z_1\right)^{\ast}\)
& \(4.71 \pm 0.19\)
& \(387.63 \pm 19.11\)
& \(4.72 \pm 0.20\)
& \(5.64 \pm 0.22\)
\\

\(\left(z_2,z_3,\dot z_2\right)^{\ast}\)
& \(4.71 \pm 0.19\)
& \(387.91 \pm 19.17\)
& \(4.74 \pm 0.20\)
& \(5.65 \pm 0.22\)
\\

\(\left(z_2,z_3,\dot z_3\right)\)
& \(12.01 \pm 0.55\)
& \(5143.69 \pm 358.06\)
& \(12.56 \pm 0.77\)
& \(17.11 \pm 0.78\)
\\

\(\left(z_1,z_3,\dot z_3\right)\)
& \(12.06 \pm 0.55\)
& \(5144.71 \pm 358.08\)
& \(12.62 \pm 0.77\)
& \(17.17 \pm 0.78\)
\\

\(\left(z_3,\dot z_3,\ddot z_3\right)\)
& \(14.25 \pm 0.63\)
& \(5379.02 \pm 403.17\)
& \(12.03 \pm 0.64\)
& \(19.44 \pm 0.85\)
\\

\hline
\end{tabular}
\end{table}

Reconstructions involving $z_3$, such as $(z_2,z_3,\dot z_2)$ and $(z_1,z_3,\dot z_1)$, are diffeomorphisms but yield larger finite-$k$ values. However, this does not imply that they have non-deterministic closure at the population level. We emphasize that a large $\widehat{\mathcal{E}}_{n,k,N}^{*}$ alone is not a sign of multi-valued closure since it is also affected by the neighborhood radius and the local $n$-step expansion. To make this effect explicit, we summarize the $k$-NN radius $r_{k,N}(\mathbf{x}_q)$ by its median $\widetilde r_k:=\operatorname{median}_{q\in \mathcal{Q}}r_{k,N}(\mathbf{x}_q)$ and its 90\% quantile $\widetilde r_{k,0.9}:=\operatorname{quantile}_{0.9}\big(r_{k,N}(\mathbf{x}_q)\big)$. The results listed in Table~\ref{tab:rk_rk90_clean} suggest that the large values of $\widehat{\mathcal{E}}_{n,k,N}^{*}$ for the $z_3$-based reconstructions are at least partly driven by finite-resolution effects. In particular, the elevated upper tail indicates that a nontrivial fraction of query points require much larger neighborhoods to collect $k$ neighbors. Since the pushforward cloud is evaluated at a fixed horizon $n$, these large-radius neighborhoods can undergo stronger finite-time expansion and contribute disproportionately to the average Fr\'echet-median cost. 

\begin{table}[!ht]
\centering
\caption{Effective neighborhood radius diagnostics. We report the median $\widetilde r_k$ and the $90\%$ quantile $\widetilde r_{k,0.9}$ of the $k$-NN radius over query points.}
\label{tab:rk_rk90_clean}
\setlength{\tabcolsep}{10pt}
\renewcommand{\arraystretch}{1.15}
\begin{tabular}{lcc}
\hline
\textbf{Reconstruction} 
& \(\boldsymbol{\widetilde r_k}\) 
& \(\boldsymbol{\widetilde r_{k,0.9}}\) \\
\hline

\(\left(z_2,\dot z_2,\ddot z_2\right)\)
& \(0.07\) & \(0.17\) \\

\(\left(z_1,z_2,\dot z_1\right)\)
& \(0.07\) & \(0.17\) \\

\(\left(z_1+z_2,\dot z_1+\dot z_2,\ddot z_1+\ddot z_2\right)\)
& \(0.06\) & \(0.18\) \\

\(\left(z_1,\dot z_1,\ddot z_1\right)\)
& \(0.07\) & \(0.19\) \\

\(\left(z_2+z_3,\dot z_2+\dot z_3,\ddot z_2+\ddot z_3\right)\)
& \(0.06\) & \(0.39\) \\

\(\left(z_1+z_3,\dot z_1+\dot z_3,\ddot z_1+\ddot z_3\right)\)
& \(0.06\) & \(0.69\) \\

\(\left(z_1,z_3,\dot z_1\right)\)
& \(0.06\) & \(0.92\) \\

\(\left(z_2,z_3,\dot z_2\right)\)
& \(0.07\) & \(0.93\) \\

\(\left(z_2,z_3,\dot z_3\right)\)
& \(0.08\) & \(8.05\) \\

\(\left(z_1,z_3,\dot z_3\right)\)
& \(0.07\) & \(8.05\) \\

\(\left(z_3,\dot z_3,\ddot z_3\right)\)
& \(0.07\) & \(17.08\) \\

\hline
\end{tabular}
\end{table}

The characteristics of the $z_3$ variable can explain this behavior. The $z_3$ signal spends long intervals near zero, punctuated by sharp intermittent excursions. As a result, in the reconstructed space, a $k$-NN neighborhood may contain points that are close to each other but actually belong to different phases of the intermittent spikes. After $n$-step pushforward propagation, these phase differences can be amplified, producing a broad empirical future cloud $\widehat K^n_{N,k}(\mathbf x,\cdot)$. Thus, a large empirical closure score does not necessarily indicate a genuinely multi-valued future kernel; it may also arise from a finite neighborhood radius combined with finite-time dynamical stretching.

Moreover, we perform a sensitivity analysis of ranking stability with respect to the number of neighbors $k$, the pushforward $n$, the query-set size $N_{\mathcal{Q}}$, the size of the Theiler window $w$, and additive measurement noise. For each configuration, we compute the Spearman rank correlation between the resulting ranking and the baseline ranking presented in Table~\ref{tab:future_cloud_baseline_comparison}. We find that the relative ranking remains robust across variations in $k$, $n$, the Theiler window size $w$, and $N_{\mathcal Q}$, even though the absolute values of $\widehat{\mathcal E}_{n,k,N}^{*}$ change, as expected for a finite-resolution and finite-horizon quantity. Under additive noise, the ranking remains moderately to strongly correlated with the baseline ranking, while the absolute scale increases because noise broadens the empirical future cloud. Detailed results are reported in Appendix~\ref{ape:Robust}.        

\subsection{Real data illustration}

We apply the finite-resolution estimator to two open real-world time-series datasets: double pendulum trajectories~\cite{kaheman2019learning, hirsh2021structured} and measles outbreak data~\cite{london1973recurrent}. We use it as a finite-horizon, finite-resolution closure score for comparing different observable choices and transformations.     

\textbf{Double Pendulum:} We first consider simulated and measured trajectories of a double pendulum. The Lagrangian of this system is: 
\begin{equation}
    \mathcal{L} = \frac{1}{2}(m_1(\dot{x}_1^2 + \dot{y}_1^2) + m_2(\dot{x}_2^2 + \dot{y}_2^2)) + \frac{1}{2}(I_1\dot{\theta}_1^2 + I_2\dot{\theta}_2^2) - (m_1\mathbf{y}_1 + m_2\mathbf{y}_2)g,
\end{equation}
where $x_1 = a_1 \sin(\theta_1)$, $x_2 = l_1\sin(\theta_1) + a_2\sin(\theta_2)$, $\mathbf{y}_1 = a_1\cos(\theta_1)$ and $\mathbf{y}_2 = l_1\cos(\theta_1) + a_2\cos(\theta_2)$, $\theta_1$ and $\theta_2$ are the angles that the top and bottom pendulum arms make with the vertical axis, $m_1$ and $m_2$ are the masses distributed over each arm, $l_1$ and $l_2$ are the lengths, $a_1$ and $a_2$ are the distances from the joints to the centres of mass of the arms, and $I_1$ and $I_2$ are the moments of inertia for each arm. Using the Euler-Lagrange equations, we can construct two second-order differential equations for this chaotic system:
\begin{equation}
    \frac{\rm{d}}{\rm{d}t}\left(\frac{\partial\mathcal{L}}{\partial \dot{\theta}_i}\right) - \frac{\partial\mathcal{L}}{\partial \theta_i} = 0, \quad i = 1,2. 
\end{equation}
The data are sampled at $\Delta t = 0.001$s for both the simulated and the experimental datasets. 

Similarly, we can write down the expression for the differential embedding of this four-dimensional dynamical system. Instead of calculating and comparing these differential embeddings one by one, we compare the following types of differential embeddings built from the same measured angle: 
\begin{equation}\label{eq: Differential embeddings}
\begin{aligned}
    F_{\theta_i} &= (\theta_i, \dot{\theta}_i, \ddot{\theta}_i, \dddot{\theta_i}) \\
    F_{\sin{\theta}_i} &= (\sin\theta_i, \dot{\sin\theta}_i, \ddot{\sin\theta}_i, \dddot{\sin\theta}_i) \\
\end{aligned}
\end{equation}
The observable $\sin\theta_i$ is less well conditioned near $\cos\theta = 0$. Across both trajectory datasets, we observe $\widehat{\mathcal E}_{n,k,N}^{*}(F_{\theta_i}) < \widehat{\mathcal E}_{n,k,N}^{*}(F_{\sin\theta_i}), i=1,2,$ over the tested horizons, as expected. This example shows the potential to select a better measurement function using this estimator.   

\textbf{Measles outbreaks:} We use measles outbreak data from New York City between 1928 and 1964, which has been shown to exhibit chaotic behavior~\cite{schaffer1985strange, sugihara1990nonlinear} as our second example. This time series contains intermittent outbreak bursts, with long intervals near a low baseline and high-amplitude events, which is similar to observations in the $z_3$-direction of the R\"ossler system. We compare the time-delay reconstruction built from the raw counts $x_t$ and from the transformed series $\log(1+x_t)$. For each pair of delay parameters $(m, \tau)$, we compute $\widehat{\mathcal{E}}_{n,k,N}^{*}$ using the same estimator.  
 
We observe that the log transformation gives lower $\widehat{\mathcal E}_{n,k,N}^{*}$ over all combinations of delay parameters, suggesting that it reduces burst-dominated future-cloud dispersion in the finite-resolution setting. The results are shown in Figure~\ref{fig:Real World Datasets}.    
\begin{figure}[t]
    \centering
    \includegraphics[width=0.95\linewidth]{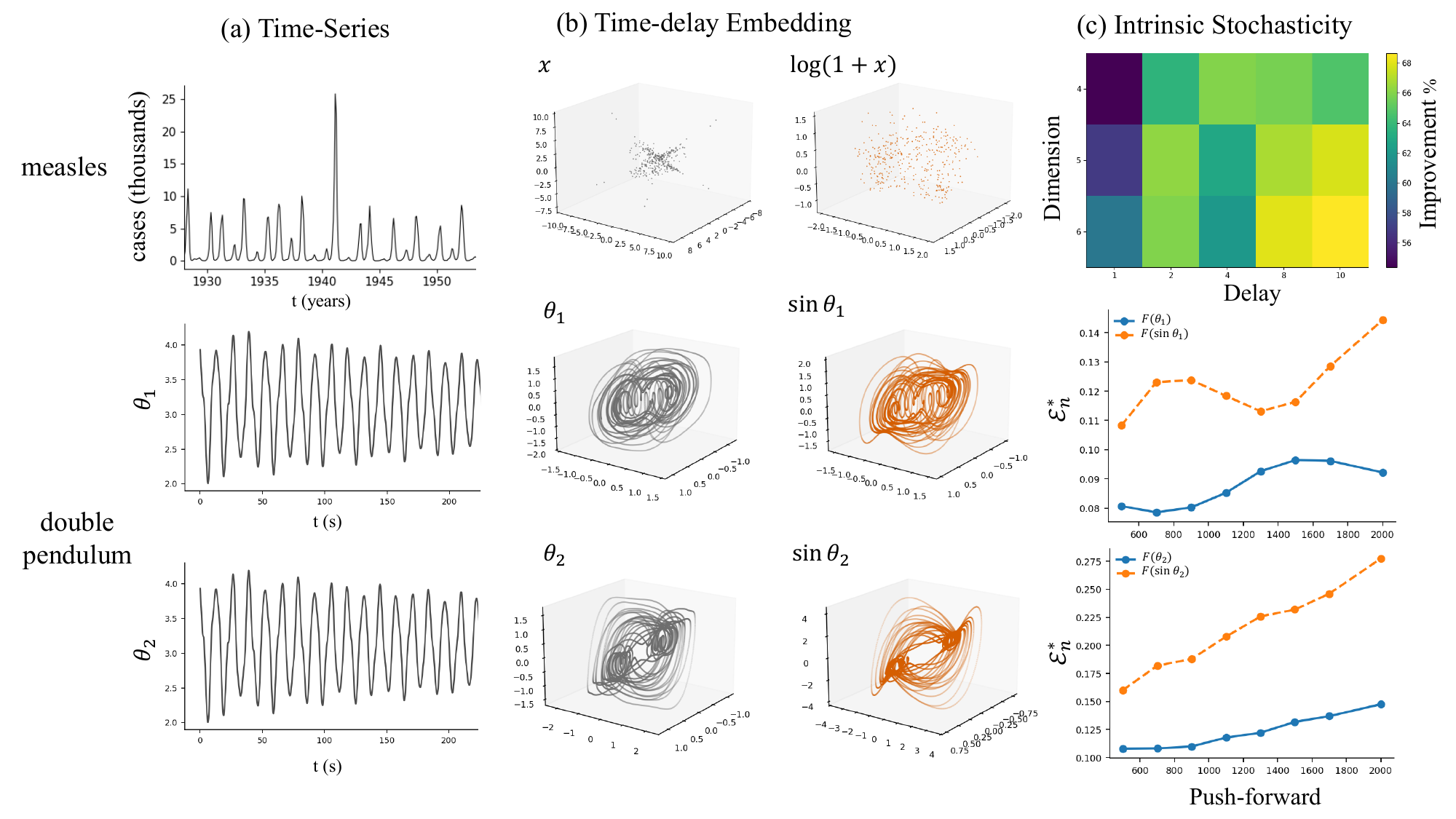}
    \caption{Illustrative real-data examples. (a). Raw time series. (b). Reconstructed point clouds. For the measles data, we compare time-delay reconstructions of the raw counts $x_t$ and the transformed series $\log(1+x_t)$, visualized by the first three principal components. For the double pendulum, we compare reconstructions built from $\theta_i$ and $\sin\theta_i$ using differential mappings. (c). $\widehat{\mathcal{E}}_{n,k,N}^{*}$ at finite resolution. For measles, we report the relative change: $\frac{\widehat{\mathcal E}_{n,k,N}^{*}(x) - \widehat{\mathcal E}_{n,k,N}^{*}(\log(1+x))}{\widehat{\mathcal E}_{n,k,N}^{*}(x)}$. For the double pendulum, we compare $\widehat{\mathcal{E}}^{*}_{n,k}({\theta_i})$ and $\widehat{\mathcal{E}}^{*}_{n,k}(\sin{\theta_i})$ across pushforward horizons. These examples illustrate finite-resolution differences in closure between observable choices.} 
\label{fig:Real World Datasets}
\end{figure}

\subsection{Downstream task impact}\label{Downstream task impact}
We next examine whether the finite-resolution closure score $\widehat{\mathcal E}_{n,k,N}^{*}$ aligns with downstream predictive performance and provides an explanation for the observed discrepancy in performance from a non-deterministic evolutionary perspective.  

The core intuition is that a deterministic model can only generate a single future prediction from each input. If the induced kernel $K^n(\mathbf x,\cdot)$ is non-Dirac, the same reconstructed state $\mathbf{x}$ is associated with multiple possible $n$-step futures. Consequently, no deterministic predictor can make the pointwise conditional risk vanish for such $\mathbf{x}$. More precisely, for any measurable deterministic predictor $g: \mathcal{X} \to \mathcal{X}$, the $n$-step risk is:
\begin{equation}
\mathcal R_n(g)
=
\int_{\mathcal X}
\int_{\mathcal X}
d_{\mathcal X}\bigl(g(\mathbf x),\mathbf y'\bigr)
K^n(\mathbf x,d\mathbf y')\,d\mu(\mathbf x).
\end{equation}
By the definition of $m_{n}$, we have $\mathcal R_n(g)\ge \mathcal E_{n}^*$, which means $\mathcal{E}_{n}^*$ gives a lower bound on the first-moment deterministic closure risk.

However, the empirical estimator $\widehat{\mathcal E}_{n,k,N}^{*}$ is only a finite-resolution and finite-sample proxy for this closure risk. We can use it as a pre-training diagnostic to compare candidate observables, rather than as a certification of downstream model performance. Rollout errors also depend on the model class, the dictionary or neural architecture, regularization, optimization, and the amount of training data. In the experiments below, we examine whether observables with smaller empirical closure scores tend to yield easier learning problems for two representative learning frameworks: Extended Dynamic Mode Decomposition (EDMD)~\cite{williams2015data} and Delay Invariant Measure (DIM)~\cite{botvinick2025invariant}.    

\textbf{Extended Dynamic Mode Decomposition:} We first compare EDMD models trained on time-delay reconstructions constructed from the $z_1$ and $z_3$ coordinate projections of the R\"ossler system. The EDMD model approximates the Koopman operator by lifting state variables into a dictionary of nonlinear observables and fitting a finite-dimensional linear operator~\cite{korda2018linear,klus2016data, lusch2018deep}. 

For each observable, we construct a time-delay reconstruction and use it as the state variable for EDMD. In both experiments, third-order polynomial features of the delay coordinates (e.g., $z_1, {z_1}^2, {z_1}^3, z_{1}^2z_2,...$) serve as the dictionary, and ridge regression is employed to fit the Koopman operator. The EDMD model is trained on one reconstructed trajectory, while a separate trajectory generated from a different initial condition is used for validation. Performance is evaluated by rollout NRMSE over multiple horizons. The results are presented in Figure~\ref{fig:EDMD}. 

\begin{figure}[!ht]
    \centering
    \includegraphics[width=1\linewidth]{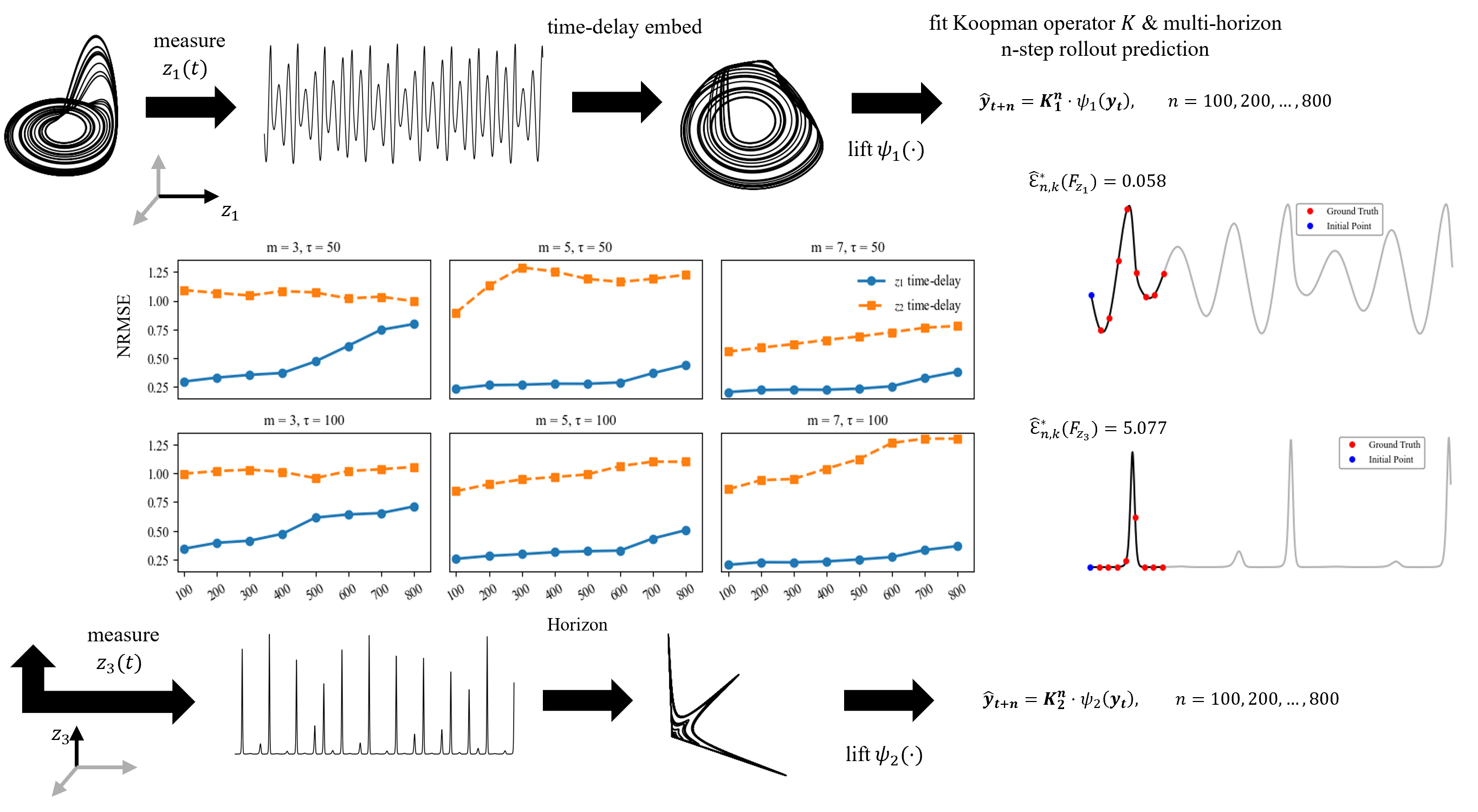}
    \caption{EDMD rollout prediction from two time-delay reconstructions of the R\"ossler system built from the $z_1$- and $z_3$-coordinate projections. EDMD is trained with third-order polynomial observables and ridge regression, and rollout performance is measured by NRMSE on a separate validation trajectory. The $z_1$-based reconstruction has a smaller empirical closure score $\widehat{\mathcal E}_{n,k,N}^{*}$ than the $z_3$-based reconstruction,  and it also yields lower EDMD rollout error in this experiment. This agreement is consistent with the interpretation of $\widehat{\mathcal E}_{n,k,N}^{*}$ as a finite-resolution closure diagnostic. }
\label{fig:EDMD}
\end{figure}

\textbf{Delay Invariant Measure Matching:} We also test the same observable comparison using the Delay Invariant Measure (DIM) objective~\cite{botvinick2025measure}. DIM is motivated by the result that, under the assumptions of the DIM framework, matching invariant measures in the same time-delay reconstructed space identifies the dynamics up to topological conjugacy. Since this learning objective depends on the geometry of the chosen reconstruction, a reconstruction with larger finite-resolution closure ambiguity may make the learning problem harder. 

Let $\hat{\nu}$ denote the empirical invariant measure of the full-state R\"ossler system, $\hat{\nu}_{\theta_j}$ denote the empirical invariant measure generated by the learned model $T_{\theta_j}$ and $T_{\theta_j}^{n}$ denote the $n$-step pushforward. Moreover, the invariant measure on the space reconstructed by $D_{z_j,\tau_j,m}$ is $\hat{\xi}^{m}_{z_j}$, and the loss function is:
\begin{equation}
\mathcal{L}_j(\theta_j) =\operatorname{ED}\!\left(T^n_{\theta_j}\#\hat{\nu}_{\theta_j},\hat{\nu}\right)
+
\operatorname{ED}\!\left(
D_{z_j,\tau_j,m}\#\hat{\nu}_{\theta_j},
\hat{\xi}^{m}_{z_j}
\right),
\qquad j\in\{1,3\},
\end{equation}
where $\operatorname{ED}$ denotes the energy distance between probability measures. We use the same architecture, optimizer, learning rate, number of epochs, and delay dimension. The results are presented in Figure~\ref{fig:DIM}. 
\begin{figure}[!ht]
\centering
\includegraphics[width=1\linewidth]{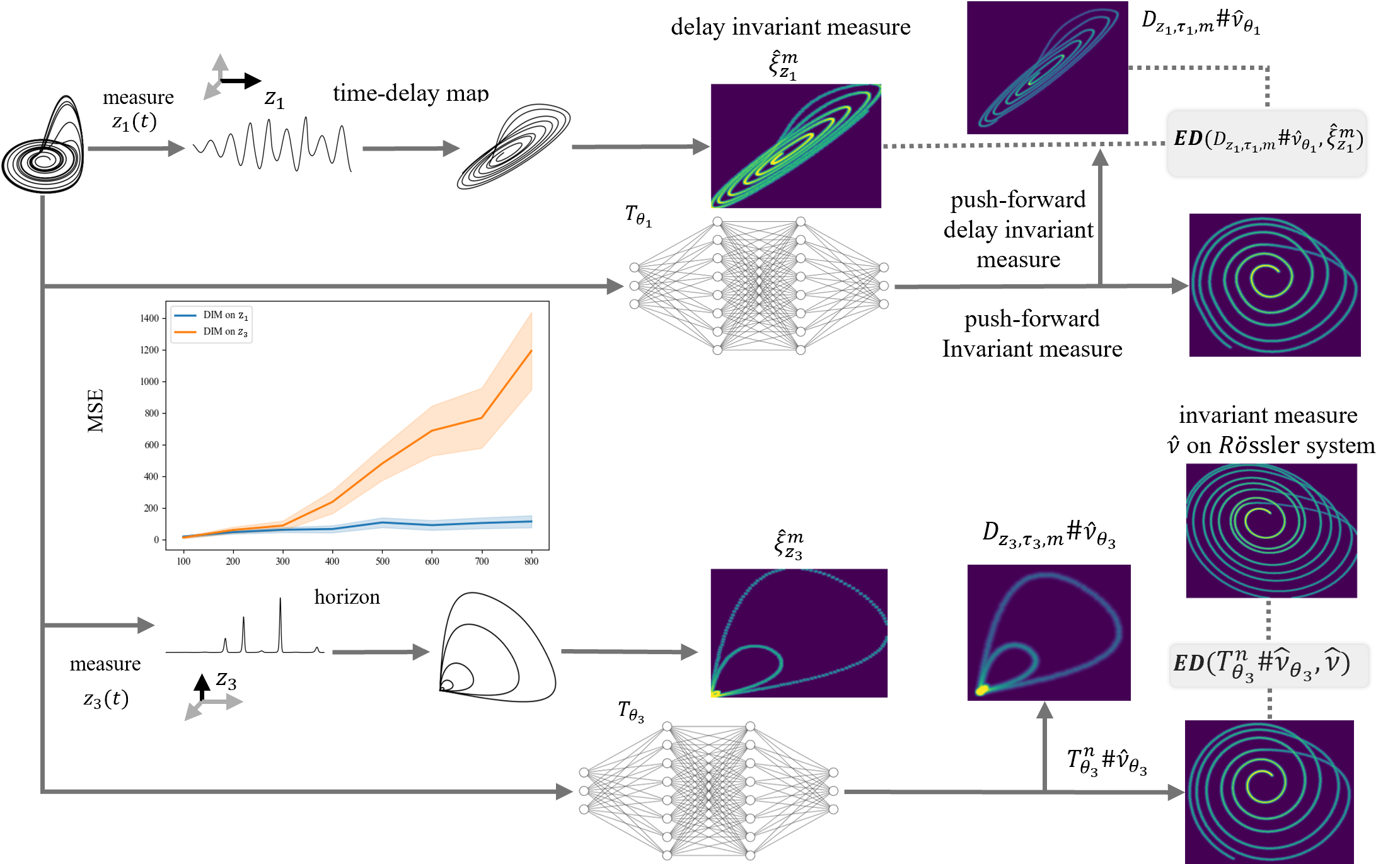}
\caption{Rollout prediction from $z_1$- and $z_3$-based time-delay reconstructions of the R\"ossler system. Both models use the same MLP architecture with 256 hidden units, Tanh activation, and Adam optimization. The $z_1$-based reconstruction has a smaller empirical closure score $\widehat{\mathcal E}_{n,k,N}^{*}$ than the $z_3$-based reconstruction, and it also yields lower rollout error over the tested horizons.} 
\label{fig:DIM}
\end{figure}

\section{Discussion and conclusion}

In this paper, we investigate the conditions under which the reconstruction map is non-injective, multiple latent states map to the same reconstructed state and may separate in the future, making the reconstructed system no longer single-valued. To quantify the ambiguity introduced by the multi-valued evolution, we proposed a measure-theoretic framework that leverages the invariant measure on the dynamical system and models the induced $n$-step evolution as an induced conditional kernel that describes the conditional distribution of futures given the reconstructed state $\mathbf{x}$. 

Inspired by this insight, we introduced the intrinsic stochasticity
$\mathcal{E}^*_{n}$ that serves as a measurement of the $n$-step deterministic closure, and reveals the geometric mechanism that governs the occurrence of information loss. For numerical implementation, we use the $k$-NN as the estimator to approximate $\mathcal{E}^*_{n}$ at a finite resolution under finite samples. Numerical examples support the use of $\widehat{\mathcal E}_{n,k,N}^{*}$ as a pre-training diagnostic for observable selection and provide a new perspective to understand the failure of the time-delay map for nonlinear dynamical system reconstruction. This viewpoint also provides insights for essential questions in related fields, including control and network science, e.g., how to select informative combinations of observables and how to determine the minimal number of sensors required to reconstruct high-dimensional dynamics without knowledge of the governing equations~\cite{liu2013observability, liu2016control, liu2011controllability, letellier2018symbolic}. Our future work focuses on the real-world applications, especially when the system is only partially observed. All code is available at https://github.com/CrispDyt/Wasserstein-Geometry-of-Information-Loss-in-Nonlinear-Dynamical-Systems.

\appendix 
\section{Differential embeddings of R\"ossler system}\label{ape: Rossler}

In this section, we provide details about the differential mappings of the R\"ossler system, which we used in Section~\ref{Rossler Case}. The formulation of R\"ossler system is:  
\begin{equation}
\begin{aligned}
\dot{z_1} &= -\,z_2 \;-\; z_3, \\
\dot{z_2} &= z_1 \;+\; a z_2, \\
\dot{z_3} &= b \;+\; z_3(z_1 - c) ,
\end{aligned}
\end{equation}
with the standard chaotic parameter values $a = 0.2, b = 0.2, c = 5.7$. Fix an observable $h$ and dimension as 3, the differential map is reconstructed as: 
\begin{equation}
D_{h,3}(\mathbf{z}) = (h(\mathbf{z}), \mathcal{L}_{f}h(\mathbf{z}), \mathcal{L}_{f}^2h(\mathbf{z}))
\end{equation}
For each case, we compute the Jacobian determinant and categorize each differential map as a diffeomorphism, as shown in Table~\ref{tab:rossler_differential_embeddings}. For compactness, define
\begin{equation}
\begin{aligned}
u_1&:=\dot z_1=-z_2-z_3,\\
u_2&:=\dot z_2=z_1+a z_2,\\
u_3&:=\dot z_3=b+z_3(z_1-c),
\end{aligned}
\qquad
\begin{aligned}
v_1&:=\ddot z_1=-z_1-a z_2-b-z_3(z_1-c),\\
v_2&:=\ddot z_2=a z_1+(a^2-1)z_2-z_3,\\
v_3&:=\ddot z_3=(b+z_3(z_1-c))(z_1-c)+z_3(-z_2-z_3).
\end{aligned}
\end{equation}

\begin{table}[!ht]
\centering
\footnotesize
\setlength{\tabcolsep}{11pt}
\renewcommand{\arraystretch}{1.15}
\caption{Explicit differential reconstructions of the Rössler system. 
The category indicates whether the corresponding reconstruction map is a global diffeomorphism in the ambient coordinates.}
\label{tab:rossler_differential_embeddings}
\begin{tabular}{lll}
\hline
\textbf{Observable} 
& \textbf{Differential mapping} 
& \textbf{Category} \\
\hline

\(h(\mathbf z)=z_1\)
& \((z_1,u_1,v_1)\)
& Non-diffeomorphism \\

\(h(\mathbf z)=z_2\)
& \((z_2,u_2,v_2)\)
& Diffeomorphism \\

\(h(\mathbf z)=z_3\)
& \((z_3,u_3,v_3)\)
& Non-diffeomorphism \\

\(h(\mathbf z)=z_1+z_2\)
& \((z_1+z_2,u_1+u_2,v_1+v_2)\)
& Non-diffeomorphism \\

\(h(\mathbf z)=z_1+z_3\)
& \((z_1+z_3,u_1+u_3,v_1+v_3)\)
& Non-diffeomorphism \\

\(h(\mathbf z)=z_2+z_3\)
& \((z_2+z_3,u_2+u_3,v_2+v_3)\)
& Non-diffeomorphism \\

\(h(\mathbf z)=(z_1,z_2)\)
& \((z_1,z_2,u_1)\)
& Diffeomorphism \\

\(h(\mathbf z)=(z_1,z_3)\)
& \((z_1,z_3,u_1)\)
& Diffeomorphism \\

\(h(\mathbf z)=(z_2,z_3)\)
& \((z_2,z_3,u_2)\)
& Diffeomorphism \\

\(h(\mathbf z)=(z_2,z_3)\)
& \((z_2,z_3,u_3)\)
& Non-diffeomorphism \\

\hline
\end{tabular}
\end{table}

\section{Sensitivity analysis}\label{ape:Robust}

To investigate the influence of parameters on the ranking, we performed sensitivity analysis over parameters, including the neighborhood size $k$, the pushforward horizon $n$, the Theiler window size $w$, the size of the query set $N_{\mathcal{Q}}$, and additive measurement noise, and we calculated the Spearman rank correlation between the default setting and the changed configuration in Table~\ref{tab:sensitivity_ranking}.       

\begin{table}[!ht]
\centering
\caption{Sensitivity of the reconstruction ranking induced by $\widehat{\mathcal E}_{n,k,N}^{*}$. Spearman correlations are computed against the baseline setting $n=20$, $k=50$, $N_{\mathcal Q}=5000$, and Theiler window $30$. We also report the median relative change rate ($\textbf{Median}$) and the maximum relative change rate $\textbf{Maximum}$ in the absolute $\widehat{\mathcal E}_{n,k,N}^{*}$ values across reconstructions.}
\label{tab:sensitivity_ranking}
\begin{tabular}{lccc}
\hline
\footnotesize
\textbf{Configuration}
&
\textbf{Spearman}
&
\textbf{Median}
&
\textbf{Max}
\\
\hline
\(k=20\) & \(1.00\) & \(0.42\) & \(0.52\) \\
\(k=100\) & \(1.00\) & \(0.55\) & \(0.62\) \\
\(n=5\) & \(0.99\) & \(0.41\) & \(0.70\) \\
\(n=10\) & \(1.00\) & \(0.15\) & \(0.43\) \\
\(n=40\) & \(1.00\) & \(0.20\) & \(0.49\) \\
\(w=0\) & \(1.00\) & \(0.03\) & \(0.07\) \\
\(w=10\) & \(1.00\) & \(0.02\) & \(0.06\) \\
\(w=60\) & \(1.00\) & \(0.01\) & \(0.04\) \\
\(N_{\mathcal Q}=1000\) & \(1.00\) & \(0.02\) & \(0.05\) \\
\(N_{\mathcal Q}=2500\) & \(1.00\) & \(0.02\) & \(0.03\) \\
\(N_{\mathcal Q}=10000\) & \(1.00\) & \(0.00\) & \(0.02\) \\
noise \(=0.01\) & \(0.85\) & \(0.10\) & \(0.78\) \\
noise \(=0.05\) & \(0.91\) & \(0.75\) & \(1.10\) \\
noise \(=0.10\) & \(0.98\) & \(0.91\) & \(2.59\) \\
\hline
\end{tabular}
\end{table}

\section{Implementation details for future-cloud baseline diagnostics}\label{app:future_cloud_baselines}
\setcounter{equation}{0}
\renewcommand{\theequation}{C \arabic{equation}}

Here, we give the implementation details for the three baseline future-cloud diagnostics used in Table~\ref{tab:future_cloud_baseline_comparison}. All three baselines are computed from the same Theiler-windowed $k$-nearest neighbor future clouds used in the computation of $\widehat{\mathcal E}_{n,k,N}^{*}$.  

Let $\{\tilde{\mathbf{x}}_t\}_{t=0}^{T-1}\subset\mathcal X$ be the reconstructed trajectory after coordinate standardization. For a query index $q$, let $\mathcal N_k(q)$ denote the set of indices of the $k$ nearest neighbors of $\tilde{\mathbf x}_q$, after applying the Theiler window. For a fixed pushforward horizon $n$, the corresponding standardized future cloud is: 
\begin{equation}
\tilde{\mathcal Y}_q = \{\tilde{\mathbf{y}}_{q,j}\}_{j=1}^{k} := \{\tilde{\mathbf x}_{t+n}:t\in\mathcal N_k(q)\}.
\end{equation}
We denote its centroid by $\bar{\tilde{\mathbf y}}_q = \frac{1}{k}\sum_{j=1}^{k}\tilde{\mathbf y}_{q,j}$. The empirical $n$-step future kernel at $\mathbf{x}_q$ is
\begin{equation}
\widehat K^n_{N,k}(\tilde{\mathbf x}_q,\cdot) =\frac{1}{k}\sum_{j=1}^{k}\delta_{\tilde{\mathbf{y}}_{q,j}},
\end{equation}
where each neighbor's future is assigned equal empirical mass $1/k$. Our empirical pointwise cost for the deterministic closure is:
\begin{equation}\label{definition_deterministic_score}
\widehat m_{n,k,N}(\tilde{\mathbf x}_q)=\inf_{\mathbf y\in\mathcal X}
\frac1k\sum_{j=1}^{k}
\|\mathbf y-\tilde{\mathbf y}_{q,j}\|_2.
\end{equation}

\noindent\textbf{Empirical conditional variance trace}. The empirical conditional variance trace is defined by
\begin{equation}
V_{n,k}(q)
= \frac{1}{k}
\sum_{j=1}^{k}
\|\mathbf y_{q,j}-\bar{\mathbf y}_q\|_2^2,
\end{equation}
which is a second-moment dispersion of the future cloud, whereas $\widehat m_{n,k,N}$ is a first-moment best-Dirac closure cost.

\noindent\textbf{Local short-horizon prediction error.} The local short-horizon prediction error uses $\bar{\mathbf y}_q$ as a local constant predictor for the actual future of the query point:
\begin{equation}
P_{n,k}(q) = \|\tilde{\mathbf x}_{q+n} -\bar{\tilde{\mathbf y}}_q\|_2.
\end{equation}
Compared with $\widehat m_{n,k,N}(\mathbf x_q)$, $P_{n,k}(q)$ depends on the future state $\mathbf x_{q+n}$ of the query point, thus a local prediction-error diagnostic rather than a function of the empirical future cloud alone. By averaging over all query points, it measures how well the neighbor-future centroid predicts the observed future trajectory. 

\paragraph{Mean pairwise future-cloud dispersion.}
The mean pairwise future-cloud dispersion is defined by
\begin{equation}
D^{\mathrm{pair}}_{n,k}(q) =\frac{2}{k(k-1)}\sum_{1\le i<j\le k}\|\mathbf y_{q,i}-\mathbf y_{q,j}\|_2,
\end{equation}
which measures the average separation between two distinct neighbor futures. Although $D^{\mathrm{pair}}_{n,k}(q)$ is a first-moment dispersion of the future cloud, it measures pairwise spread within the neighborhood of the future cloud while $\widehat m_{n,k, N}(\mathbf x_q)$ measures the distance from the future cloud to the closest Dirac representative.

All methods reported in Table~\ref{tab:future_cloud_baseline_comparison}
are obtained by averaging the pointwise values over the query set $\mathcal Q$ as:
\begin{equation}
\widehat{V}_{n,k,N}=\frac1{|\mathcal Q|}\sum_{q\in\mathcal Q}V_{n,k}(q),
\quad
\widehat P_{n,k,N} =\frac1{|\mathcal Q|}\sum_{q\in\mathcal Q}P_{n,k}(q),
\quad 
\widehat D^{\mathrm{pair}}_{n,k,N}
=
\frac1{|\mathcal Q|}
\sum_{q\in\mathcal Q}D^{\mathrm{pair}}_{n,k}(q).
\end{equation}

From Table~\ref{tab:future_cloud_baseline_comparison}, we observe that all four indices are closely related, have the same numerical trend, and this is not a coincidence since they are computed from the same empirical future kernel $\widehat K^n_{N,k}(\mathbf x_q,\cdot)$ but in different loss geometry and optimal minimizer. 

First, we discuss the relationship between $\widehat m_{n,k,N}(\mathbf x_q)$ and $V_{n,k}(q)$. $\widehat m_{n,k,N}(\mathbf x_q)$ is a first-moment best-Dirac loss, whereas $V_{n,k}(q)$ is a second-moment dispersion around the centroid $\bar{\mathbf y}_q$. In particular, conditional variance scales quadratically with separation, while $\widehat m_{n,k,N}$ scales linearly. Define $M_{n,k}(q)=\frac1k\sum_{j=1}^{k}\|\mathbf y_{q,j}-\bar{\mathbf y}_q\|_2$. Since the geometric median minimizes the
first-moment loss,
\begin{equation}
\widehat m_{n,k,N}(\mathbf x_q)
\le
M_{n,k}(q).
\end{equation}
By Jensen's inequality, $M_{n,k}(q)\le\sqrt{V_{n,k}(q)}$.
Consequently,
\begin{equation}\label{relation1}
\widehat m_{n,k,N}(\mathbf x_q)
\le
M_{n,k}(q)
\le
\sqrt{V_{n,k}(q)}.
\end{equation}
Equation~\eqref{relation1} explains why both metrics yield similar qualitative rankings, while $V_{n,k}(q)$ has a much larger scale when the future cloud has large separations.

Next, we discuss the relationship between $\widehat{m}_{n,k,N}(\mathbf{x}_q)$ and $D^{\mathrm{pair}}_{n,k}$. By the definition~\eqref{definition_deterministic_score}, for any fixed future state $\tilde{\mathbf y}_{q,i}$, we have
\begin{equation}
\widehat m_{n,k,N}(\tilde{\mathbf x}_q)
\le
\frac1k
\sum_{j=1}^{k}
\|\tilde{\mathbf y}_{q,i}-\tilde{\mathbf y}_{q,j}\|_2.
\end{equation}
Averaging over \(i=1,\ldots,k\) yields
\begin{equation}
\widehat m_{n,k,N}(\tilde{\mathbf x}_q)\le\frac1{k^2}\sum_{i=1}^{k}\sum_{j=1}^{k}\|\tilde{\mathbf y}_{q,i}-\tilde{\mathbf y}_{q,j}\|_2=\frac{k-1}{k}D^{\mathrm{pair}}_{n,k}(q).
\end{equation}

Conversely, let $\mathbf y_q^*$ be a geometric median of the future cloud. By the triangle inequality,
\begin{equation}
\|\tilde{\mathbf y}_{q,i}-\tilde{\mathbf y}_{q,j}\|_2
\le
\|\tilde{\mathbf y}_{q,i}-\mathbf y_q^*\|_2
+
\|\tilde{\mathbf y}_{q,j}-\mathbf y_q^*\|_2 .
\end{equation}
Averaging over \(i\) and \(j\) yields
\begin{equation}
\frac{1}{k^2}\sum_{i=1}^{k}\sum_{j=1}^{k}\|\tilde{\mathbf y}_{q,i}-\tilde{\mathbf y}_{q,j}\|_2\le2\widehat m_{n,k,N}(\tilde{\mathbf x}_q).
\end{equation}
Hence,
\begin{equation}
\widehat m_{n,k,N}(\tilde{\mathbf x}_q)
\le
\frac{k-1}{k}D^{\mathrm{pair}}_{n,k}(q)
\le
2\widehat m_{n,k,N}(\tilde{\mathbf x}_q),
\end{equation}
Thus, $D^{\mathrm{pair}}_{n,k}$ and $\widehat m_{n,k,N}(\tilde{\mathbf x}_q)$ are comparable up to constant factors, which explains why the two metrics have similar numerical scales with similar rankings across all reconstructions. 

The local short-horizon prediction error $P_{n,k}(q)$ serves as a prediction baseline, which takes the centroid $\bar{\tilde{\mathbf y}}_q$ as a local predictor and evaluates it on the actual $n$-step future of the query point, $\tilde{\mathbf x}_{q+n}$. Therefore, we use it to compare $\widehat m_{n,k,N}(\tilde{\mathbf x}_q)$ as a standard short-horizon prediction. 

\bibliographystyle{plain}
\bibliography{reference}

\end{document}